\theoremstyle{plain}
\newtheorem{theorem}{Theorem}
\newtheorem{lemma}[theorem]{Lemma}
\newtheorem{proposition}[theorem]{Proposition}
\newtheorem{definition}[theorem]{Definition}
\newtheorem{corollary}[theorem]{Corollary}
\theoremstyle{definition}
\newtheorem{example}{Example}
\newtheorem{construction}{Construction}
\newtheorem*{remark}{Remark}
\DeclareMathOperator*{\Max}{\text{\upshape Max}}
\DeclareMathOperator*{\Min}{\text{\upshape Min}}
\renewcommand{\vec}[1]{\boldsymbol{#1}}
\begin{document}
\title{An Integer Programming Based Bound for Locally Repairable Codes}

\author{\IEEEauthorblockN{Anyu~Wang and
        Zhifang~Zhang}

\IEEEauthorblockA{Key
Laboratory of Mathematics Mechanization, NCMIS\\
Academy of Mathematics and Systems Science, CAS, Beijing, China\\
Email: \{wanganyu, zfz\}@amss.ac.cn}
}
\maketitle
\thispagestyle{empty}

\begin{abstract}
The locally repairable code (LRC) studied in this paper is an $[n,k]$ linear code of which the value at each coordinate can be recovered by a linear combination of at most $r$ other coordinates. The central problem in this work is to determine the largest possible minimum distance for LRCs.
First, an integer programming based upper bound is derived for any LRC. Then by solving the programming problem under certain conditions, an explicit upper bound is obtained for LRCs with parameters $n_1>n_2$, where $n_1 = \left\lceil \frac{n}{r+1} \right\rceil$ and $n_2 = n_1 (r+1) - n$. Finally, an explicit construction for LRCs attaining this upper bound is presented over the finite field $\mathbb{F}_{2^m}$, where $m\geq n_1r$. Based on these results, the largest possible minimum distance for all LRCs with $r \le \sqrt{n}-1$ has been definitely determined, which is of great significance in practical use.

\end{abstract}

\section{Introduction}
In distributed storage systems, redundancy must be introduced to protect data against device failures. The simplest form of redundancy is {\it replication}. But it is extremely inefficient due to its large storage overhead, namely, $c$ copies of the data have to be stored to guarantee $(c-1)$-erasure tolerance. To improve the storage efficiency, {\it erasure codes} are employed in distributed storage systems, such as Windows Azure \cite{Azure2012}, Facebook's Hadoop cluster \cite{XorbasVLDB}, etc, where the original data are divided into $k$ equal-size fragments and then encoded into $n$ fragments $(n>k)$ stored in $n$ different nodes. The fault tolerance property of the erasure code ensures that the system can tolerate up to $d-1$ node failures, where $d$ is the minimum distance of the erasure code. Particularly,
the MDS code is a kind of erasure code that attains the optimal minimum distance with respect to the Singleton bound and thus provides the highest level of fault tolerance for given storage overhead. But the MDS code is still inefficient for distributed storage systems because of the disk I/O complexity it causes in the {\it node repair} issue.
Specifically, when an $[n,k]$ MDS code is employed, repairing a failed node  usually needs the access of $k$ other survival nodes, which entails too much complexity in contrast with the amount of data to be repaired.

To improve this, Gopalan et al. \cite{gopalan2012locality}, Oggier et al. \cite{oggier2011self}, and Papailiopoulos et al. \cite{papailiopoulos2012simple} introduced {\it repair locality} for
erasure codes. The $i$th coordinate of a code has repair locality $r$ if the value at this coordinate can be recovered by accessing at most  $r$ other coordinates. In more detail, a code is said to have {\it information locality} if the locality $r$ is ensured for each coordinate in an information set containing information symbols, e.g., systematic coordinates in a linear systematic code.
Alternatively, a code is said to have {\it all symbol locality} if the locality $r$ is ensured for all coordinates. In this paper we call an $[n,k]$ linear code with all symbol locality $r$ as a {\it locally repairable code} (LRC). When $r \ll k$ it greatly reduces the disk I/O complexity for repair.

Considering the fault tolerance level, the minimum distance is also
an important metric for LRCs. Gopalan et al. \cite{gopalan2012locality} first derived the following upper bound for codes with information locality:
\begin{equation}\label{EqGoplanBnd}
d \le n-k+1 - (\left\lceil \frac{k}{r} \right\rceil -1)
\end{equation}
which is a tight bound by the construction of pyramid codes \cite{Pyramid}.
Although the bound (\ref{EqGoplanBnd}) certainly holds for LRCs, it is not tight in many cases. The results in \cite{gopalan2012locality} pointed out that when $(r+1)\nmid n$ and $r \mid k$ the bound (\ref{EqGoplanBnd}) cannot be attained for codes with all symbol locality, and for those attaining this bound only the existence result was given for the case $(r+1)\mid n$ and the finite field needs to be large enough.
Later, in paper \cite{papailiopoulos2012locally} and \cite{forbes2013locality}, the bound (\ref{EqGoplanBnd}) was generalized to vector codes and nonlinear codes. The impact of field size on the minimum distance of LRCs was considered in \cite{cadambe2013upper}.
The result provides an improved upper bound, but relies on a parameter related to another open problem in coding theory.
In order to deal with multiple erasures in local repair, Prakash et al \cite{prakash2012optimal} proposed the locality $(r,\delta)$ associating the coordinate with an inner-error-correcting code with length less than $r+\delta-1$ and minimum distance at least $\delta$. It is evident that the locality $(r,\delta)$ degenerates into the locality $r$ when $\delta=2$. An upper bound was derived in \cite{prakash2012optimal} for codes with information locality $(r,\delta)$ which coincides with the bound (\ref{EqGoplanBnd}) at $\delta=2$, and an explicit code attaining this bound was given for a specific value of the length $n=\lceil\frac{k}{r}\rceil(r+1)$.

For simplicity, the LRC that achieves the bound (\ref{EqGoplanBnd}) with equality is usually called an optimal LRC.
The first explicit optimal LRCs for the case $(r+1)\mid n$ were constructed in \cite{tamo2013optimal} and \cite{silberstein2013optimal} by using Reed-Solomon codes and Gabidulin codes respectively. Both constructions were built over a  finite field of size exponential in the code length $n$. Moreover, it was proved in \cite{silberstein2013optimal} that the construction also induces an optimal LRC when $n\mod(r+1)~>~k\mod r~>~0$. Then in \cite{FamilyTamo14} for the same case $(r+1)\mid n$ the authors constructed an optimal code over a finite field of size comparable to $n$ by using specially designed polynomials. This construction can be extended to the case $(r+1)\nmid n$ with the minimum distance $d\geq n-k-\lceil\frac{k}{r}\rceil+1$ which is at most one less than the upper bound defined in (\ref{EqGoplanBnd}).

Recently, Song et al. \cite{song2014optimal} obtained more results about tightness of the bound (\ref{EqGoplanBnd}). Specifically, they derived a new case where there are no optimal LRCs and two new cases where there exist optimal LRCs over sufficiently large fields, leaving only two cases in which tightness of the bound (\ref{EqGoplanBnd}) is unknown.
Another recent improvement was in \cite{prakash2014codes} where Prakash et al. showed a new upper bound on the minimum distance for LRCs. This bound relies on a sequence of recursively defined parameters and is tighter than the bound (\ref{EqGoplanBnd}). But no general constructions attaining this new bound was presented.

There are lots of other work devoted to the locality in the handling of multiple node failures, such as \cite{wang2013repair,tamo2014bounds,rawat2014locality,FamilyTamo14} considering LRCs which permit parallel access of ``hot data",  the papers \cite{wang2013repair,pamiesjuarez2013locally} studying LRCs with general local repair groups, and the work \cite{prakash2014codes} which proposed sequential local repair. In a word, more and more research work have concerned about codes with the local repair property, especially those codes attaining the largest possible minimum distance.

\subsection{Our Contribution}
Since the bound (\ref{EqGoplanBnd}) is not tight for LRCs in many cases, the central problem in this work is determining the largest possible minimum distance of an $[n,k]$ LRC.

Our first result is an integer programming based upper bound,
\begin{equation*}
d \le n-k+1 - \eta,
\end{equation*}
where $\eta = \max\{x : \Psi(x) - x < k\}$ and the function $\Psi(x)$ relies on an integer programming problem defined below
\begin{equation*}
\Psi(x) = \Max_{\substack{ s,t_1,\dots,t_s \\ a_1,\dots,a_s}}  \Min_{\;\;l,h_1,\dots,h_l} ( xr+1 - \sum_{i=1}^{l-1}(a_{h_i} - t_{h_i})),\;\; \forall 1 \le x \le \left\lceil \frac{n}{r+1} \right\rceil,
\end{equation*}
where the `Max' is subject to
\begin{equation*}
\begin{cases}
t_1 +\dots + t_s = n_1; \\
a_1 + \dots + a_s = n_2; \\
a_i \ge t_i -1, \forall i \in[s];\\
s\geq1; t_i\geq1, \forall i \in[s],
\end{cases}
\end{equation*}
and the `Min' is subject to
\begin{equation*}
t_{h_1}+\dots+t_{h_{l-1}}<x\leq t_{h_1}+\dots+t_{h_{l}}.
\end{equation*}

By solving the integer programming problem when $n_1>n_2$, we get the second result of this paper: an explicit upper bound on the minimum distance (Theorem \ref{ThmBndN1N2}), where $n_1 = \left\lceil \frac{n}{r+1} \right\rceil$ and $n_2 = n_1 (r+1) - n$. This upper bound stands for all possible values of $k$ while most previous results (e.g., \cite{silberstein2013optimal,song2014optimal}) that depend on the value of $k$ in addition to the parameters $n$ and $r$, which means our bound sometimes covers wider parameter region.
Additionally, in Section \ref{SecExpBnd}-B we show by comparisons that this explicit bound can give sharper description of the largest possible minimum distance than previous results (i.e. the results in \cite{gopalan2012locality, prakash2014codes, song2014optimal}) in many cases.

The third result concerns the construction of LRCs.
Specifically, when $n_1>n_2$, we give an explicit construction  (Construction \ref{Cnst}) of the $[n,k]$ LRC  attaining the bound in Theorem \ref{ThmBndN1N2} over the finite field $\mathbb{F}_{2^m}$, where $m\geq n_1r$.
Therefore,  we have definitely determined the largest possible minimum distance for all $[n,k]$ LRCs under the condition $n_1>n_2$.
Since the condition $n \ge (r+1)^2$ implies $n_1 > n_2$,
we have completely obtained the largest possible minimum distance for LRCs with $r \le \sqrt{n}-1$, which is of great significance in practical use.

\subsection{Related Work}
In \cite{wang2014repair}, the authors developed the framework of regenerating sets which  determines the upper bound on the minimum distance for any LRC by computing a function related to the structure of local repair groups. The upper bound derived in this work can be viewed as an optimization based on this framework. A brief introduction of the framework and the motivation for optimization can be found in  Section \ref{SecRegSet}.

\subsection{Organization}
Section \ref{SecRegSet}  introduces the framework of regenerating sets and shows the motivation of optimization.
Section \ref{SecBnds} derives an integer programming based upper bound on the minimum distance for LRCs.
Then Section \ref{SecExpBnd} solves the integer programming problem for $n_1 > n_2$, and obtains an explicit upper bound.
Section \ref{SecCnst} presents an explicit construction attaining this bound. Finally, Section \ref{SecConclusion} concludes the paper.

\section{Regenerating Sets and locally repairable codes}\label{SecRegSet}
Let $\mathcal{C}$ be an $[n,k,d]_q$ linear code with generator matrix $G = (\vec{g}_1, \dots, \vec{g}_n)$, where $\vec{g}_i \in \mathbb{F}_q^{k}$ for $1\leq i\leq n$.
Then the regenerating set introduced in \cite{wang2014repair} can be defined as follows.
\begin{definition}\label{DefRegSet}
For an $[n,k,d]_q$ linear code $\mathcal{C}$, a regenerating set of the $i$th coordinate, $1\leq i\leq n$, is a subset $R\subseteq [n]$ such that $i\in R$ and $\vec{g}_i$ is an $\mathbb{F}_q$-linear combination of $\{\vec{g}_j\}_{j \in R\backslash \{i\}}$, where $[n]$ denotes the set of integers $\{1,2,\dots,n\}$.
\end{definition}

The collection of all regenerating sets of the $i$th coordinate is denoted by $\mathcal{R}_i$. Furthermore, a sequence of regenerating sets $R_1, R_2,\dots,R_m$, where $R_i\in\mathcal{R}_{l_i}$ and $l_i\in[n]$ for $1\leq i\leq m$, is said to have a {\it nontrivial union} if $l_j\notin\cup_{i=1}^{j-1}R_i$ for $1\leq j\leq m$.

For a linear code $\mathcal{C}$, define the function
\begin{equation}\label{EqPhiX}
\Phi(x)=\min\{|\cup_{i=1}^xR_i|: R_i\in\mathcal{R}_{l_i} \mbox{~and~} R_1,\dots,R_x \mbox{ have a nontrivial union}\}.
\end{equation}
In particular, it is assumed $\Phi(0)=0$.
Then it was proved that the minimum distance is closely related to the function $\Phi(x)$.

\begin{theorem}[\upshape \cite{wang2014repair}]\label{ThmRegDis}
For any $[n,k,d]$ linear code,
$d\leq n-k+1-\rho$,
where $\rho=\max\{x : \Phi(x)-x<k\}.$
\end{theorem}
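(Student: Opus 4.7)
The plan is to reduce the inequality to the standard Singleton-type fact: for any subset $T\subseteq[n]$ with $\operatorname{rank}(\{\vec{g}_i\}_{i\in T})<k$, one has $d\le n-|T|$. This follows by choosing any nonzero $\vec{u}\in\mathbb{F}_q^k$ orthogonal to $\operatorname{span}\{\vec{g}_i\}_{i\in T}$; the codeword $\vec{u}G$ is nonzero (since $G$ has rank $k$) and is supported in $[n]\setminus T$, hence has weight at most $n-|T|$. It therefore suffices to exhibit a set $T$ with $|T|\ge k-1+\rho$ whose columns of $G$ fail to span $\mathbb{F}_q^k$.

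To construct such a $T$, I would start with an optimal configuration $R_1,\dots,R_\rho$, with $R_j\in\mathcal{R}_{l_j}$, having a nontrivial union and attaining $\Phi(\rho)=|\bigcup_j R_j|$. Setting $U_j=R_1\cup\cdots\cup R_j$ and $V_j=\operatorname{span}\{\vec{g}_i:i\in U_j\}$, the central estimate is
\[
\dim V_j-\dim V_{j-1}\le(|U_j|-|U_{j-1}|)-1
\]
for every $j=1,\dots,\rho$. The idea is that the nontrivial union condition forces $l_j\notin U_{j-1}$, so $l_j$ is itself one of the new coordinates in $R_j\setminus U_{j-1}$; by Definition~\ref{DefRegSet}, $\vec{g}_{l_j}$ lies in the span of the remaining columns of $R_j$, which sit either inside $V_{j-1}$ or among the other new coordinates. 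Thus at least one of the $|R_j\setminus U_{j-1}|$ freshly introduced columns is redundant. Summing from $j=1$ to $\rho$ with $\dim V_0=0$ yields $\dim V_\rho\le\Phi(\rho)-\rho<k$ by the very definition of $\rho$.

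Finally, I would enlarge $U_\rho$ to a set $T$ by adjoining $k-1-\dim V_\rho$ coordinates from $[n]\setminus U_\rho$ whose columns linearly extend $V_\rho$ to a subspace of dimension exactly $k-1$; such coordinates exist because the full matrix $G$ has rank $k$. Then $|T|=\Phi(\rho)+(k-1-\dim V_\rho)\ge\rho+(k-1)$ and $\operatorname{rank}\{\vec{g}_i\}_{i\in T}=k-1<k$, so the duality fact above gives $d\le n-|T|\le n-k+1-\rho$, as required.

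The step I expect to be hardest is the rank-deficit induction: at each stage $j$ one must carefully separate the genuinely new coordinates in $R_j\setminus U_{j-1}$ from those already absorbed, and verify that the distinguished redundant column $\vec{g}_{l_j}$ really is one of the new ones (precisely what the nontrivial union condition ensures), so that the saving of $1$ per stage is honest and not double-counted across different $j$'s. Once this estimate is in place, the extension to a $(k-1)$-dimensional span and the appeal to the duality fact are routine linear algebra.
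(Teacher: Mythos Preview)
The paper does not itself prove Theorem~\ref{ThmRegDis}; it is quoted from \cite{wang2014repair} without argument. Your proposal is correct and is precisely the standard approach used in that reference (and, in the special case of locality $r$, already in Gopalan et al.~\cite{gopalan2012locality}): accumulate regenerating sets one at a time, observe that at stage $j$ the distinguished coordinate $l_j$ is new (by the nontrivial-union condition) yet its column $\vec{g}_{l_j}$ is spanned by the other columns of $R_j$, so the rank increment is at most $|U_j|-|U_{j-1}|-1$; telescoping gives $\operatorname{rank}\{\vec{g}_i:i\in U_\rho\}\le\Phi(\rho)-\rho<k$, and then one pads $U_\rho$ with independent columns up to rank $k-1$ and invokes the Singleton-type fact $d\le n-|T|$. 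Your concern about double-counting is unfounded: since $l_j\notin U_{j-1}$, the redundant column at stage $j$ is genuinely among the newly added indices, so the savings are disjoint across stages. The extension step is also safe because every column indexed by $U_\rho$ already lies in $V_\rho$, so any column outside the current span necessarily has index in $[n]\setminus U_\rho$.
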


\begin{remark}
An explicit bound from Theorem \ref{ThmRegDis} depends on computation of the function $\Phi(x)$ which is determined by the specific generator matrix.
Sometimes, partial information of the generator matrix may help get a precise estimate of $\Phi(x)$ which in turn gives a tight bound for the minimum distance.
An instance where Theorem \ref{ThmRegDis} derives a tight bound is the square code proposed in \cite{wang2014repair}.
In this paper, we aim to tighten the minimum distance bound for LRCs by estimating $\Phi(x)$ and then optimizing the value.
The following two subsections explain our motivations through examples.
\end{remark}

\subsection{Estimate of $\Phi(x)$}
First, we need to redefine the locality $r$ by using the concept of regenerating sets.
\begin{definition}\label{DefLocR}
For $1\leq i\leq n$, the $i$th coordinate of an $[n,k]$ code $\mathcal{C}$ has locality $r$ if there exists a regenerating set $R \in \mathcal{R}_i$ with $|R| \le r+1$.
\end{definition}
We refer to an $[n,k]$ linear code of which each coordinate has locality $r$ as a locally repairable code (LRC).
Because $r =1$ implies repetition and for $r \ge k$ MDS code possess the optimal distance, we assume $1< r<k$ throughout the paper.
Moreover, because of the upper bound on the information rate of LRCs \cite{FamilyTamo14}, we assume that $\frac{k}{n} \le \frac{r}{r+1}$ for any $[n,k]$ LRC.

In \cite{wang2014repair} the authors estimated the function $\Phi(x)$ for different kinds of locality and reproved the minimum distance bounds that had been given in previous literatures. For example, it proved $\Phi(x)\leq (r+1)x$ for LRCs which induces the bound (\ref{EqGoplanBnd}); $\Phi(x)\leq r\left\lceil\frac{x}{\delta-1}\right\rceil+x$ for codes with locality $(r,\delta)$ and derived the upper bound given in \cite{prakash2012optimal}; etc.

In this paper we focus on LRCs.
The following example shows that when $(r+1)\nmid n$ one can estimate $\Phi(x)$ better than $\Phi(x)\leq (r+1)x$ and thus can derive a tighter bound.

\begin{example}\label{ExBnd}
Let $\mathcal{C}$ be an $[n,k,d]$ LRC with $(r+1) \nmid n$.
We claim that $\Phi(x) \le x(r+1)-1 $ for $x\geq 2$.

First, the following algorithm generates a sequence of regenerating sets  $R_1,\dots,R_{l}$ that has a nontrivial union and $\cup_{i=1}^l R_l=[n]$.

\vspace*{16pt}
\begin{algorithmic}[1]
    \STATE Set $i=1$
    \WHILE{$\cup_{j=1}^{i-1}R_j \subsetneqq [n]$}
        \STATE Pick $i_0 \in [n] - \cup_{j=1}^{i-1}R_j$
        \STATE Choose $R_i \in \mathcal{R}_{i_0}$ such that $|R_i|=r+1$
        \STATE Set $i=i+1$
    \ENDWHILE
\end{algorithmic}
\vspace*{16pt}

Because $(r+1) \nmid n$ and $|R_i|=r+1$ for  $1 \le i \le l$, there exist $i_1,i_2\in[l]$ such that $R_{i_1} \cap R_{i_2} \neq \emptyset$.
By the definition of $\Phi(x)$,  $\Phi(x)\le \min\{|\cup_{i\in I}R_i|:I\subset[l], |I|=x\}$.
Therefore,
\begin{equation*}
\Phi(x) \le \begin{cases} r+1, \text{ if } x=1, \\ x(r+1) -1, \text{ if } x \ge 2.\end{cases}
\end{equation*}
It follows that $ \rho \ge \left\lceil \frac{k+1}{r} \right\rceil -1$, and thus
\begin{equation}\label{EqBndExDist}
d \le n-k+1 - (\left\lceil \frac{k+1}{r} \right\rceil -1).
\end{equation}

Obviously, the bound (\ref{EqBndExDist}) is tighter than the bound (\ref{EqGoplanBnd}) for the case $(r+1)\nmid n$.
Particularly, the difference occurs when $r\mid k$ which also explains a known fact (see \cite{gopalan2012locality,song2014optimal}) that the bound (\ref{EqGoplanBnd}) is unachievable when $(r+1)\nmid n$ and $r\mid k$.
\end{example}
Later in Section III we will give a shaper estimate of $\Phi(x)$ and derive a tighter bound for LRCs.

\subsection{Optimization of $\Phi(x)$}
From Theorem \ref{ThmRegDis} we observe that for a given LRC, its minimum distance $d$ is upper bounded by $n-k+1-\rho$, where $\rho$ depends on the function $\Phi(x)$ which is determined by the code itself.
Therefore, to upper bound $d$ for all LRCs with parameters $n,k,r$, one needs to find the code which gives the minimum $\rho$ or the maximum $\Phi(x)$.
Actually, we find the structure of regenerating sets plays an important role in determining  the function $\Phi(x)$ which in turn influence the minimum distance.

\begin{example}\label{ExConstruct}
Consider LRCs with parameters $n=10,k=5$ and $r=3$.
We construct two such LRCs which have different structure of regenerating sets.

The first code $\mathcal{C}_1$ is constructed by using rank-metric codes \cite{silberstein2013optimal}.
Specifically, let
$$\{\alpha_1, \alpha_2, \alpha_3, \alpha_5, \alpha_6, \alpha_7, \alpha_9 \} \subseteq \mathbb{F}_{2^7}$$
be a basis of $\mathbb{F}_{2^7}$ over $\mathbb{F}_{2}$ and let
\begin{equation*}
\begin{cases}
\alpha_4 = \alpha_1 + \alpha_2 +\alpha_3 \\
\alpha_8 = \alpha_5 + \alpha_6 + \alpha_7 \\
\alpha_{10} = \alpha_9.
\end{cases}
\end{equation*}
The generator matrix of $\mathcal{C}_1$ is
$G_1 = (\vec{g}_1,\vec{g}_2,\dots,\vec{g}_{10})$,
where $\vec{g}_i = (\alpha_i,\alpha_i^2,\alpha_i^4,\alpha_i^8,\alpha_i^{16})^\tau$ for $1 \le i \le 10$.

It is easy to verify that $\mathcal{C}_1$ is an LRC over $\mathbb{F}_{2^7}$ and a sequence of its regenerating sets is
\begin{equation}\label{eq3}\{1,2,3,4\},\{5,6,7,8\},\{9,10\}.\end{equation}
Therefore, $\Phi(x) \le 4x-2$ for $1 \le x \le 3$ and  $\rho \ge 2$.
By Theorem \ref{ThmRegDis} we have  $d \le n-k+1 - \rho \le 4$.
On the other hand, since any $7$ columns of $G_1$ has full rank, it implies $d\geq n-6=4$.
As a result, $\mathcal{C}_1$ has minimum distance $d=4$.

The second code $\mathcal{C}_2$ is an $[n=10,k=5]$ linear code over $\mathbb{F}_{13}$ with generator matrix
\begin{equation*}
G = \begin{pmatrix}
1 & 0 & 0 & 1 & 0 & 0 & 1 & 5 & 5 & 11 \\
0 & 1 & 0 & 1 & 0 & 0 & 0 & 3 & 7 & 10 \\
0 & 0 & 1 & 1 & 0 & 0 & 0 & 10 & 10 & 7 \\
0 & 0 & 0 & 0 & 1 & 0 & 1 & 6 & 3 & 9 \\
0 & 0 & 0 & 0 & 0 & 1 & 1 & 10 & 9 & 6
\end{pmatrix}.
\end{equation*}
Observe that $\mathcal{C}_2$ has locality $r=3$ and a sequence of its regenerating sets is
\begin{equation}\label{eq4}\{1,2,3,4\},\{1,5,6,7\},\{1,8,9,10\}.\end{equation}
Furthermore, it can be verified that $\Phi(1)=4,\Phi(2)=7$ and $\Phi(3)=10$. Then $\rho=1$ and $d\leq n-k+1 - \rho \le 5$ from Theorem \ref{ThmRegDis}.
On the other hand, one can verify  that $\mathcal{C}_2$ has minimum distance $d = 5$.
\end{example}

From (\ref{eq3}) and (\ref{eq4}) we can see that $\mathcal{C}_1$ and $\mathcal{C}_2$ have different structure of regenerating sets. The former has pairwise disjoint regenerating sets while the latter has overlapped regenerating sets. This difference results in that
the $\Phi(x)$ of $\mathcal{C}_1$ is no more than that of $\mathcal{C}_2$, therefore the latter code has a higher upper bound from Theorem \ref{ThmRegDis}.

\section{Upper Bounds on The Minimum Distance}\label{SecBnds}
Denote $n_1 = \left\lceil \frac{n}{r+1} \right\rceil$ and $n_2 = \left\lceil \frac{n}{r+1} \right\rceil (r+1) - n$. It follows that $n=n_1(r+1)-n_2$ and $0\leq n_2<r+1$. The integer programming based upper bound is derived in three steps as described in the following three subsections respectively.

\subsection{From $\Phi(x)$ to a Set Cover Problem}\label{SubSecCov}
First, for any $[n,k]$ LRC, we convert the problem of estimating the $\Phi(x)$ to a set cover problem (Lemma \ref{LemCov1}, Lemma \ref{LemCov2}).
To begin with, we introduce the concept of an $(r+1)$-cover.
\begin{definition}\label{DefCov}
Let $\mathcal{S}=\{S_1,\dots,S_t\}$ be a collection of subsets of $[n]$. We call $\mathcal{S}$ an $(r+1)$-cover over $[n]$ if the following conditions are satisfied:
\begin{itemize}
  \item[(1)] $|S_i|=r+1$ for $1\leq i\leq t$;
  \item[(2)] $\cup_{i\in[t]}S_i=[n]$ and $\cup_{i\in[t]\setminus\{j\}}S_i\neq[n]$ for any $j\in[t]$.
\end{itemize}
\end{definition}
In the remainder of this paper we usually omit the phrase `over $[n]$' for an $(r+1)$-cover when it is evident from the context.

\begin{lemma}\label{LemCov1}
For a given $[n,k]$ locally repairable code $\mathcal{C}$, it induces an $(r+1)$-cover  $\mathcal{S}=\{S_1,\dots,S_{t}\}$, $t\geq n_1$, satisfying $$\Phi(x)\leq \Min_{\substack{J\subseteq [t] \\ |J|=x}} |\cup_{i\in J}S_i|$$ for $1\leq x\leq n_1$, where $\Phi(x)$ is defined as in (\ref{EqPhiX}).
\end{lemma}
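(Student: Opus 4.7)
The plan is to construct $\mathcal{S}$ by a greedy iterative procedure that attaches a \emph{dedicated coordinate} to each chosen set, trim the family to a minimal cover, and then use these dedicated coordinates to build the nontrivial sequence of regenerating sets needed for any $J$.

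For the construction, process the coordinates $i=1,2,\dots,n$ in order. Whenever $i$ is not covered by the sets added so far, invoke the locality of $\mathcal{C}$ to obtain a regenerating set of $i$ of size at most $r+1$, pad it by arbitrary elements of $[n]$ to a set $\bar{R}_i\in\mathcal{R}_i$ of size exactly $r+1$ (padding preserves the regenerating property because it only enlarges the spanning set witnessing the linear dependence), add $\bar{R}_i$ to the current family, and declare $i$ the dedicated coordinate of $\bar{R}_i$. The resulting ordered family $(\bar{R}_{i_1},\bar{R}_{i_2},\dots)$ satisfies the invariant $i_a\notin\bigcup_{b<a}\bar{R}_{i_b}$ for every $a$. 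Then iteratively remove any set whose removal still leaves a cover of $[n]$, to produce a minimal subcover $\mathcal{S}=\{S_1,\dots,S_t\}$, indexed in the inherited order, with corresponding dedicated coordinates $c_1,\dots,c_t$.

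The $(r+1)$-cover conditions and the bound $t\ge n_1$ are immediate: each $|S_j|=r+1$ by construction, $\bigcup_j S_j=[n]$ because trimming preserves coverage, minimality is enforced by the trimming rule, and $n=\bigl|\bigcup_j S_j\bigr|\le t(r+1)$ yields $t\ge\lceil n/(r+1)\rceil=n_1$. For the main inequality, fix any $J\subseteq[t]$ with $|J|=x$ and list its elements as $j_1<j_2<\dots<j_x$ in the inherited order. Set $R_a=S_{j_a}$ and $l_a=c_{j_a}$ for $a=1,\dots,x$. Each $R_a$ is a regenerating set of $l_a$ by the construction, and the invariant forces $l_a\notin S_{j_b}$ for every $b<a$, so $(R_1,\dots,R_x)$ has a nontrivial union. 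Consequently
\begin{equation*}
\Phi(x)\le\Bigl|\bigcup_{a=1}^{x}R_a\Bigr|=\Bigl|\bigcup_{j\in J}S_j\Bigr|,
\end{equation*}
and minimizing over $J$ finishes the proof.

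The main obstacle I expect is ensuring that the trimming step does not destroy the witnesses needed by the nontrivial-sequence argument. The key observation is that each dedicated coordinate $c_j$ lies outside \emph{every} set in the initial ordered family that precedes $\bar{R}_{c_j}$, not merely the surviving ones; since trimming can only shrink the unions $\bigcup_{j'<j}S_{j'}$, the property $c_j\notin\bigcup_{j'<j}S_{j'}$ remains intact after trimming, so the dedicated coordinates still serve as valid witnesses in the minimal cover.
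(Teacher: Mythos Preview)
Your proof is correct and follows essentially the same approach as the paper's: both run the greedy covering algorithm from Example~\ref{ExBnd} to produce a sequence of regenerating sets with a nontrivial union, then trim to a minimal cover. The paper's proof is extremely terse (three sentences) and leaves implicit the two points you spell out explicitly---that a regenerating set of size at most $r+1$ can be padded to one of size exactly $r+1$, and that any subsequence of a nontrivial-union sequence (in the inherited order) is again a nontrivial-union sequence, so trimming and then restricting to an arbitrary $J$ preserves the witnesses---but these are exactly the details a careful reader would supply.
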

\begin{proof}
By using the algorithm in Example \ref{ExBnd}, we get a sequence of regenerating sets $R_1,\dots,R_l$ which has a nontrivial union.
Then by deleting some $R_i$'s which lie in the union of the remainders, we can finally get an $(r+1)$-cover $\{R_{i_1},\dots,R_{i_t}\}$ as required by the lemma.
\end{proof}

\begin{lemma}\label{LemCov2}
For any $(r+1)$-cover  $\mathcal{S}=\{S_1,\dots,S_{t}\}$, $t> n_1$, there exists an $(r+1)$-cover consisting of $n_1$ subsets, denoted as $\mathcal{T}=\{T_1,\dots,T_{n_1}\}$, which satisfies for $1\leq x\leq n_1$, $$\Min_{\substack{J\subseteq [t] \\ |J|=x}} |\cup_{i\in J}S_i|\leq \Min_{\substack{I\subseteq [n_1] \\ |I|=x}} |\cup_{i\in I}T_i|\;.$$
\end{lemma}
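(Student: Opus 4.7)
My plan is to prove the lemma by induction on $t \ge n_1$. The base case $t = n_1$ is immediate: take $\mathcal{T} = \mathcal{S}$. For the inductive step ($t > n_1$), I will construct from $\mathcal{S}$ a family $\mathcal{S}^*$ of $t-1$ subsets of size $r+1$ that covers $[n]$ and satisfies $\Min_{|J'|=x,\,J' \subseteq [t] \setminus \{j\}} |\cup_{i \in J'} S^*_i| \ge \Min_{|J|=x,\,J \subseteq [t]} |\cup_{i \in J} S_i|$ for every $x \in [n_1]$. Any set of $\mathcal{S}^*$ that lies in the union of the others can then be discarded (repeatedly, if necessary) without violating the cover property and without decreasing the min-$x$-union, since each discard only restricts the family of admissible index subsets. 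The process terminates at a size between $n_1$ and $t-1$; iterating on $t$ eventually yields an $(r+1)$-cover of size exactly $n_1$, which must be the minimum possible since $(n_1-1)(r+1) < n$ prevents any $(r+1)$-cover of $[n]$ from having fewer members.

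The construction of $\mathcal{S}^*$ is as follows. Pick any $S_j \in \mathcal{S}$ and let $U_j = S_j \setminus \cup_{i \ne j} S_i$, which is nonempty by condition~(2) of Definition~\ref{DefCov}. I redistribute the elements of $U_j$ into the other sets by a sequence of $|U_j|$ swaps. Each swap picks an element $v$ with current multiplicity at least $2$ in $\{S_{i'} : i' \ne j\}$, say $v \in S_i$ with $i \ne j$, and replaces $v$ in $S_i$ by the next unplaced $u \in U_j$. This preserves $|S_i| = r+1$ and keeps $v$ covered by some other set. Feasibility is an excess count: the multiplicity surplus in $\{S_i\}_{i \ne j}$ starts at $(t-1)(r+1) - (n - |U_j|) \ge n_2 + |U_j|$ by $t > n_1$, and each swap consumes exactly one unit, so an element of multiplicity at least $2$ is always available. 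Finally delete $S_j$ to obtain $\mathcal{S}^*$.

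For the key inequality, fix $J' \subseteq [t] \setminus \{j\}$ with $|J'| = x$, and write $l_u$ for the index of the set into which $u \in U_j$ was placed. A direct count gives
\[
|\cup_{i \in J'} S^*_i| \;=\; |\cup_{i \in J'} S_i| \;-\; |V'| \;+\; |\{u \in U_j : l_u \in J'\}|,
\]
where $V' \subseteq \cup_{i \in J'} S_i$ collects those elements that get evicted from every one of their occurrences inside $J'$. Each element of $V'$ is witnessed by at least one swap whose index $l_u$ lies in $J'$, while the $u$'s added within $J'$ are pairwise distinct elements of $U_j$, disjoint from $\cup_{i \ne j} S_i$. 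Hence $|V'| \le |\{u \in U_j : l_u \in J'\}|$, giving $|\cup_{i \in J'} S^*_i| \ge |\cup_{i \in J'} S_i|$. Taking the minimum over $J'$ and noting that every $J' \subseteq [t] \setminus \{j\}$ is also a valid $J \subseteq [t]$ completes the inductive step.

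The main obstacle I anticipate is the bookkeeping behind $|V'| \le |\{u \in U_j : l_u \in J'\}|$: one has to trace, through the entire swap sequence, which elements of $\cup_{i \in J'} S_i$ lose all their occurrences within $J'$, and match them injectively to swap indices landing in $J'$. Everything else is a direct multiplicity-surplus count controlled by the assumption $t > n_1$.
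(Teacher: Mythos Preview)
Your proof is correct, but it takes a genuinely different route from the paper. The paper's argument is one-shot: it initializes $T_i = S_i$ for $1 \le i \le n_1$ (so that $\Min_{|J|=x,\,J\subseteq[t]}|\cup_{i\in J}S_i| \le \Min_{|I|=x,\,I\subseteq[n_1]}|\cup_{i\in I}T_i|$ holds trivially by restriction of the index set), and then repeatedly performs a single-element swap in one $T_j$---replacing an element of $T_j \cap (\cup_{i\ne j}T_i)$ by an element of $[n]$ not yet covered by $\cup_i T_i$---until $\cup_i T_i = [n]$. Each such swap touches only one set and can only preserve or enlarge every partial union, so the inequality is maintained throughout; minimality of the final $\mathcal{T}$ is verified as an invariant of the swap.

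Your approach instead works top-down by induction on $t$: delete one set $S_j$ at a time after redistributing its private part $U_j$ into the remaining sets via swaps, then discard redundant sets to restore minimality, and recurse. This is valid, and your injection $|V'| \le |\{u : l_u \in J'\}|$ goes through (each $v \in V'$ is the unique evicted element of some swap with index in $J'$, and distinct $v$'s yield distinct witnessing $u$'s since each swap evicts exactly one element). But it costs more bookkeeping: you must control the effect of \emph{several} swaps simultaneously on a fixed $J'$, whereas the paper only ever analyzes a single swap at a time. Both proofs rest on the same underlying observation---trading a multiply-covered element for a needed one cannot shrink any partial union---but the paper's construction avoids the induction, the minimality repair step, and the multi-swap accounting.
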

\begin{proof}
Since $t>n_1$, set $T_i=S_i$ initially for $1\leq i\leq n_1$. Due to the condition (2) in Definition \ref{DefCov}, it obviously has $\cup_{i=1}^{n_1}T_i\subsetneqq[n]$. Then we recursively invoke the following Step 1 to  Step 3 on the collection $\mathcal{T}=\{T_1,\dots,T_{n_1}\}$ expanding $\cup_{i=1}^{n_1}T_i$ by one element at each invocation until finally $\cup_{i=1}^{n_1}T_i=[n]$.
\begin{itemize}
  \item[]{\bf Step 1.~} Pick $T_j\in\mathcal{T}$ such that $T_j\cap(\cup_{T\in\mathcal{T}\setminus\{T_j\}}T)\neq\emptyset$.
  \item[]{\bf Step 2.~} Choose $a\in T_j\cap(\cup_{T\in\mathcal{T}\setminus\{T_j\}}T)$ and $b\in [n]-\cup_{i=1}^{n_1}T_i$.
  \item[]{\bf Step 3.~} $T_j\leftarrow(T_j-\{a\})\cup\{b\}$.
\end{itemize}

Note that the subset $T_j$ exists in Step 1 because $\sum_{i=1}^{n_1}|T_i|=n_1(r+1)\geq n>|\cup_{i=1}^{n_1}T_i|$. After the three steps, only one element in $T_j$ is replaced by an outside element and all other subsets remain unchanged. Therefore, $\cup_{i=1}^{n_1}T_i$ is expanded by one element. Furthermore, the union size of any $x$ subsets, $1\leq x\leq n_1$, is unchanged or increased by $1$. Therefore, for $1\leq x\leq n_1$,
$$
\Min_{\substack{ J \subseteq [t] \\ |J| = x}} |\cup_{i \in J}S_i| \le \Min_{\substack{I \subseteq [n_1] \\ |I| = x}} |\cup_{i \in I}S_i| \le \Min_{\substack{I \subseteq [n_1] \\ |I| = x}} |\cup_{i \in I}T_i|.
$$

Moreover, the condition $\cup_{i\in[t]\setminus\{j\}}S_i\neq[n]$ for any $j\in[t]$ implies that $S_j\nsubseteq \cup_{i\in [t]\setminus\{j\}}S_i$ for any $j\in[t]$. It is easy to verify that the property $T_j\nsubseteq \cup_{i\in [n_1]\setminus\{j\}}T_i$ for any $j\in[n_1]$ still holds after an invocation of Step 1 to Step 3. Thus we finally get an $(r+1)$-cover $\mathcal{T}$ as the lemma requires.

\end{proof}
By Lemma \ref{LemCov1} and Lemma \ref{LemCov2}, we have transformed the problem of deriving an upper bound for $\Phi(x)$ into the problem of estimating the set union size in an $(r+1)$-cover consisting of $n_1$ subsets. In the sequel, a further investigation into the $(r+1)$-cover helps to finally derive an upper bound of $\Phi(x)$.

\subsection{From the Set Cover to an Integer Programming Problem}\label{SubSecConneted}
Then we transform the set cover problem into an integer programming problem (Lemma \ref{LemIntProb}).
The following definition comes from the concept of connectivity in graph theory.
\begin{definition}\label{DefCon}
Let $\mathcal{S}=\{S_1,\dots,S_t\}$ be a collection of nonempty subsets of $[n]$. We say $\mathcal{S}$ is connected if for any nonempty subset $I\subsetneqq[t]$, it has $(\cup_{i\in I}S_i)\cap(\cup_{j\in[t]\setminus I}S_j)\neq\emptyset$. Particularly, a collection containing only one subset, i.e. $t=1$, is also called connected.
\end{definition}

\begin{remark}
In fact, a collection $\mathcal{S}$ defines a graph $G(V,E)$,  where each vertex $v_i\in V$ corresponds to a subset $S_i\in\mathcal{S}$ and there is an edge $(v_i,v_j)\in E$ if and only if $S_i\cap S_j\neq\emptyset$. Thus a connected collection in
Definition \ref{DefCon} actually corresponds to a connected graph.
\end{remark}

\begin{proposition}\label{PropCon}
For a connected collection of subsets $\mathcal{S}=\{S_1,\dots,S_t\}$, there exists a permutation of $[t]$, say $\{i_1,\dots,i_t\}$, such that
\begin{equation}\label{EqConOrder}
S_{i_j}\cap(\cup_{h=1}^{j-1} S_{i_h})\neq\emptyset, \;2 \le j \le t.
\end{equation}
\end{proposition}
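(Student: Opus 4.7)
The plan is to construct the permutation $\{i_1,\dots,i_t\}$ inductively, using the connectedness hypothesis to guarantee that at each step we can extend the sequence so that condition (\ref{EqConOrder}) holds.

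First I would pick $i_1\in[t]$ arbitrarily (the choice is free since the condition is vacuous for $j=1$). Then, assuming $i_1,\dots,i_{j-1}$ have already been selected for some $2\le j\le t$, set $I_{j-1}=\{i_1,\dots,i_{j-1}\}\subsetneqq[t]$. Since $I_{j-1}$ is a nonempty proper subset of $[t]$, the connectedness of $\mathcal{S}$ in Definition \ref{DefCon} gives
\begin{equation*}
\Bigl(\bigcup_{h\in I_{j-1}} S_h\Bigr)\cap\Bigl(\bigcup_{i\in[t]\setminus I_{j-1}} S_i\Bigr)\neq\emptyset.
\end{equation*}

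From this nonempty intersection I would extract an element $a$ lying in both unions. Because $a\in\bigcup_{i\in[t]\setminus I_{j-1}} S_i$, there exists some index $i_j\in[t]\setminus I_{j-1}$ with $a\in S_{i_j}$. Because simultaneously $a\in\bigcup_{h=1}^{j-1} S_{i_h}$, this choice of $i_j$ satisfies
\begin{equation*}
S_{i_j}\cap\Bigl(\bigcup_{h=1}^{j-1} S_{i_h}\Bigr)\neq\emptyset,
\end{equation*}
which is precisely (\ref{EqConOrder}). Iterating this selection for $j=2,3,\dots,t$ exhausts $[t]$ and yields the required permutation.

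There is essentially no hard step here: the proposition is the combinatorial counterpart of the fact that a connected graph admits a vertex ordering in which every vertex after the first has a neighbour among its predecessors (a BFS/DFS ordering), and the hypothesis of connectedness in Definition \ref{DefCon} is already phrased in exactly the bipartition form needed to carry out the inductive step. The only minor point to be careful about is that $I_{j-1}$ remains a \emph{proper} subset of $[t]$ throughout the construction, which is ensured by restricting $j\le t$ so that the process terminates exactly when all indices have been used.
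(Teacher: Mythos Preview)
Your proof is correct and follows essentially the same greedy/inductive approach as the paper's own proof, which presents the construction as a short algorithm and justifies the existence of each $i_j$ by appealing to connectedness. If anything, you spell out the application of Definition~\ref{DefCon} to the bipartition $I_{j-1}\cup([t]\setminus I_{j-1})$ more explicitly than the paper does.
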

\begin{proof}
In fact, $i_1 ,\dots,i_t$ can be determined by the following algorithm.

\vspace*{16pt}
\begin{algorithmic}[1]
    \STATE Pick $i_1 \in [t]$
    \FOR {$h=2$ \TO $l$}
        \STATE Pick $i_h \in [t] -\{i_1, i_2, \dots,i_{h-1}\}$ such that \hfill \\ \hspace*{6pt} $S_{i_h} \cap (S_{i_1} \cup \dots \cup S_{i_{h-1}}) \neq \emptyset$
    \ENDFOR
\end{algorithmic}
\vspace*{16pt}

Note that the $i_h$ at line 3 exists because the collection $\mathcal{S}$ is connected.
\end{proof}

\begin{corollary}\label{CorConSize}
 For a connected collection of subsets $\mathcal{S}=\{S_1,\dots,S_t\}$, define an integer $a=\sum_{i=1}^t|S_i|-|\cup_{i=1}^tS_i|$, then $a\geq t-1$.
\end{corollary}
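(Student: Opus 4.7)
The plan is to invoke Proposition \ref{PropCon} to reorder the subsets into a sequence in which each set after the first overlaps with the union of all preceding sets, and then bound the growth of the union one step at a time.

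Concretely, let $\{i_1, \dots, i_t\}$ be a permutation of $[t]$ satisfying (\ref{EqConOrder}), and set $U_j = \bigcup_{h=1}^{j} S_{i_h}$ for $1 \le j \le t$. For each $j \ge 2$, the inclusion--exclusion identity gives
\begin{equation*}
|U_j| \;=\; |U_{j-1}| + |S_{i_j}| - |S_{i_j} \cap U_{j-1}|,
\end{equation*}
and since $S_{i_j} \cap U_{j-1} \neq \emptyset$ by (\ref{EqConOrder}), we have $|S_{i_j} \cap U_{j-1}| \ge 1$. Hence $|U_j| \le |U_{j-1}| + |S_{i_j}| - 1$.

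Telescoping this inequality from $j=2$ to $j=t$, starting from $|U_1| = |S_{i_1}|$, yields
\begin{equation*}
|\cup_{i=1}^{t} S_i| \;=\; |U_t| \;\le\; |S_{i_1}| + \sum_{j=2}^{t}\bigl(|S_{i_j}| - 1\bigr) \;=\; \sum_{i=1}^{t}|S_i| - (t-1),
\end{equation*}
which, upon rearrangement, is exactly $a \ge t-1$. The case $t=1$ is trivial since then $a = 0 = t-1$.

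There is no real obstacle here: Proposition \ref{PropCon} already supplies the ordering that makes the one-element overlap available at each step, and the rest is a straightforward telescoping of the inclusion--exclusion identity. The only thing to be mindful of is the degenerate case $t=1$, which the definition of connectedness explicitly permits and which must be handled separately (trivially) so that the bound $a \ge t-1$ remains meaningful.
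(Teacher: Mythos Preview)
Your proof is correct and follows essentially the same approach as the paper's: both invoke Proposition~\ref{PropCon} to obtain an ordering in which each new set meets the union of its predecessors, then telescope the inclusion--exclusion identity to express $a$ as the sum of the pairwise intersection sizes $\sum_{j=2}^{t}|S_{i_j}\cap U_{j-1}|$, each of which is at least~$1$. Your explicit handling of the $t=1$ case is a nice touch, though the paper's version covers it implicitly via the empty sum.
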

\begin{proof}
By Proposition \ref{PropCon}, we can assume without loss of generality that $\mathcal{S}$ satisfies the condition (\ref{EqConOrder}), i.e., $S_j\cap(\cup_{h=1}^{j-1} S_j)\neq\emptyset \text{ for all } 2 \le j \le t$. Since
\begin{align*}
|\cup_{i=1}^tS_i|&=|S_t|-|S_t\cap(\cup_{i=1}^{t-1} S_i)|+|\cup_{i=1}^{t-1} S_i|\\
&=|S_t|-|S_t\cap(\cup_{i=1}^{t-1} S_i)|+|S_{t-1}|-|S_{t-1}\cap(\cup_{i=1}^{t-2} S_i)|+|\cup_{i=1}^{t-2} S_i|\\
&=\sum_{i=1}^t|S_i|-\sum_{i=2}^{t}|S_i\cap(\cup_{j=1}^{i-1} S_j)|\;,
\end{align*}
We have $a=\sum_{i=2}^{t}|S_i\cap(\cup_{j=1}^{i-1} S_j)|\geq t-1$.
\end{proof}

\begin{remark}
In the following, we introduce a set of integers to characterize the structure of an $(r+1)$-cover. First, for an $(r+1)$-cover $\mathcal{S}=\{S_1,\dots,S_{n_1}\}$, we determine a partition of $[n_1]$, say $[n_1]=I_1\cup\dots\cup I_s$, such that
\begin{itemize}
  \item[(1)] for $1\leq i\leq s$, the induced collection  $\mathcal{S}_{I_i}=\{S_j\mid j\in I_i\}$ is connected; and
  \item[(2)] for $1\leq i<j\leq s$, $(\cup_{h\in I_i}S_h)\cap (\cup_{h\in I_j}S_h)=\emptyset$.
\end{itemize}
In other words, this partition of a collection $\mathcal{S}$ actually corresponds to splitting the graph $G(V,E)$ into connected components, where the graph $G(V,E)$ is determined as in the remark after Definition \ref{DefCon}.
Then for $1\leq i\leq s$, define integers $t_i=|I_i|$ and $a_i=\sum_{j\in I_i}|S_j|-|\cup_{j\in I_i}S_j|$.
\end{remark}
It is easy to derive the following lemma.

\begin{lemma}\label{LemIntCond}
For an $(r+1)$-cover $\mathcal{S}=\{S_1,\dots,S_{n_1}\}$, define integers $s,t_1,\dots,t_s, a_1,\dots,a_s$ as in the above remark. Then the following conditions must hold:
\begin{equation}\label{EqIntCondOut}
\begin{cases}
t_1 +\dots + t_s = n_1; \\
a_1 + \dots + a_s = n_2; \\
a_i \ge t_i -1, \forall 1 \le i \le s;\\
s\geq1; t_i\geq1, \forall 1 \le i \le s.
\end{cases}
\end{equation}
\end{lemma}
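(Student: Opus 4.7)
The plan is to verify each of the four conditions in \eqref{EqIntCondOut} directly from the definitions of $s, t_1, \ldots, t_s, a_1, \ldots, a_s$ given in the preceding remark, invoking the earlier corollary only where a connectivity argument is needed.

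For the first and fourth conditions, the key observation is that $I_1, \ldots, I_s$ is by construction a partition of $[n_1]$ with each $I_i$ nonempty, so $\sum_{i=1}^s t_i = \sum_{i=1}^s |I_i| = |[n_1]| = n_1$, and $t_i \ge 1$, $s \ge 1$ follow immediately. I would dispatch these in one sentence.

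For the second condition I would use the fact that $\mathcal{S}$ is an $(r+1)$-cover, so $|S_j| = r+1$ for every $j$, giving $\sum_{j \in I_i}|S_j| = t_i(r+1)$ and hence $a_i = t_i(r+1) - |\cup_{j \in I_i} S_j|$. Summing over $i$ and using the fact that the unions $\cup_{j \in I_i} S_j$ are pairwise disjoint (property (2) in the partition) and jointly cover $[n]$ (since $\cup_{j \in [n_1]} S_j = [n]$ by Definition \ref{DefCov}), the second sum telescopes to $n$, yielding
\begin{equation*}
\sum_{i=1}^s a_i \;=\; n_1(r+1) - n \;=\; n_2.
\end{equation*}

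The third condition is the only one requiring real work, and this is where I would appeal to Corollary \ref{CorConSize}. By property (1) of the partition, the induced subcollection $\mathcal{S}_{I_i}$ is connected, and by definition its associated quantity $\sum_{j \in I_i}|S_j| - |\cup_{j \in I_i}S_j|$ is exactly $a_i$, while $|I_i| = t_i$. Applying Corollary \ref{CorConSize} to $\mathcal{S}_{I_i}$ therefore gives $a_i \ge t_i - 1$. I do not anticipate any real obstacle; the only subtlety is being careful that the partition into connected components is actually well-defined (any collection of subsets decomposes uniquely into connected components via the auxiliary graph described in the remark after Definition \ref{DefCon}), so that $s, t_i, a_i$ are unambiguously specified before the four identities are checked.
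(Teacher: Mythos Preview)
Your proposal is correct and follows essentially the same approach as the paper's proof: the partition property handles conditions one and four, the sum $\sum a_i = n_1(r+1)-n = n_2$ handles condition two, and Corollary~\ref{CorConSize} applied to each connected component $\mathcal{S}_{I_i}$ handles condition three. The paper's version is simply terser, folding your detailed verification into a single displayed computation plus a one-line appeal to the corollary and the remark.
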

\begin{proof}By using the notations in the remark, $I_1\cup\dots\cup I_s$ is a partition of $[n_1]$, therefore
 $a_1 + \dots + a_s=\sum_{i=1}^s(\sum_{j\in I_i}|S_j|-|\cup_{j\in I_i}S_j|)=\sum_{i=1}^{n_1}|S_i|-|\cup_{i=1}^{n_1}S_i|=n_1(r+1)-n=n_2$. The other conditions come from Corollary \ref{CorConSize} and the remark.
\end{proof}

\begin{lemma}\label{LemIntProb}
For any $(r+1)$-cover $\mathcal{S}=\{S_1,\dots,S_{n_1}\}$, define integers $s,t_1,\dots,t_s, a_1,\dots,a_s$ as before, then for $1\leq x \leq n_1$, it holds
$$\Min_{\substack{I\subseteq[n_1] \\ |I|=x}}|\cup_{i\in I}S_i|\leq\Min_{l,h_1,\dots,h_l}(xr+1-\sum_{i=1}^{l-1}(a_{h_i}-t_{h_i}))\;,$$
where the `Min' on the right side is subject to all integers $l,h_1,\dots,h_l$ satisfying
\begin{equation}\label{EqIntCondIn}
t_{h_1}+\dots+t_{h_{l-1}}<x\leq t_{h_1}+\dots+t_{h_{l}}.
\end{equation}
\end{lemma}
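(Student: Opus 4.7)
The plan is to prove the inequality by constructing, for each tuple $(l, h_1, \ldots, h_l)$ satisfying the constraint (\ref{EqIntCondIn}), a specific subset $I \subseteq [n_1]$ of size $x$ whose associated set union is bounded by $xr + 1 - \sum_{i=1}^{l-1}(a_{h_i} - t_{h_i})$. Taking the minimum on both sides over all feasible tuples then delivers the lemma.

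Given such a tuple, I would set $x' := x - \sum_{i=1}^{l-1} t_{h_i}$; the hypothesis (\ref{EqIntCondIn}) ensures $1 \leq x' \leq t_{h_l}$. I form $I$ as the disjoint union of the components $I_{h_1}, \ldots, I_{h_{l-1}}$ taken in full together with exactly $x'$ indices drawn from $I_{h_l}$. To select the latter, I invoke the fact that the induced sub-collection $\mathcal{S}_{I_{h_l}}$ is connected and apply Proposition \ref{PropCon} to order its members as $S_{j_1}, \ldots, S_{j_{t_{h_l}}}$ in such a way that each $S_{j_k}$ with $k \geq 2$ intersects $\cup_{q < k} S_{j_q}$; I then retain only $\{j_1, \ldots, j_{x'}\}$, which clearly gives $|I| = x$.

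Now I would bound $|\cup_{i \in I} S_i|$ term by term. By property (2) of the partition, the unions drawn from distinct components $I_{h_i}$ are pairwise disjoint, so the total decomposes additively. Each full component contributes exactly $(r+1)t_{h_i} - a_{h_i}$ by the definition of $a_{h_i}$. The ordered-intersection property shows that $S_{j_1}$ contributes $r+1$ new elements and each subsequent $S_{j_k}$ contributes at most $r$ new ones, so the partial contribution is bounded by $x' r + 1$. Summing these pieces and substituting $x' = x - \sum_{i=1}^{l-1} t_{h_i}$, a short rearrangement yields exactly $xr + 1 - \sum_{i=1}^{l-1}(a_{h_i} - t_{h_i})$, completing the estimate.

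The only subtlety I anticipate is the boundary case $x' = t_{h_l}$, in which $I$ actually contains the full component $I_{h_l}$: here the true union size is $(r+1)t_{h_l} - a_{h_l}$ rather than the looser estimate $t_{h_l} r + 1$ used above, but Corollary \ref{CorConSize} supplies $a_{h_l} \geq t_{h_l} - 1$, which is precisely what is needed to ensure $(r+1)t_{h_l} - a_{h_l} \leq t_{h_l} r + 1$, so the bound still goes through. Apart from this, the argument is essentially combinatorial bookkeeping, and the main conceptual step is matching the whole-versus-partial split of components dictated by the tuple $(l, h_1, \ldots, h_l)$ to the geometry of the $(r+1)$-cover.
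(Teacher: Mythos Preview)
Your proposal is correct and follows essentially the same approach as the paper's proof. The paper likewise fixes a feasible tuple $(l,h_1,\dots,h_l)$, takes the full components $I_{h_1},\dots,I_{h_{l-1}}$ together with a connected sub-collection $J\subseteq I_{h_l}$ of size $x-\sum_{i=1}^{l-1}t_{h_i}$, and bounds $|\cup_{i\in J}S_i|\le |J|r+1$ via Corollary~\ref{CorConSize}; your argument simply makes the existence of $J$ explicit through the ordering of Proposition~\ref{PropCon} and inlines the proof of Corollary~\ref{CorConSize} rather than citing it. Your boundary case $x'=t_{h_l}$ is already subsumed by Corollary~\ref{CorConSize}, so no separate treatment is needed, though what you wrote is of course still correct.
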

\begin{proof}
Suppose $l$ and $h_1,\dots,h_l$ are integers satisfying (\ref{EqIntCondIn}).
Then there exists $J \subseteq I_{h_l}$ such that $|J| = x - (t_{h_1}+ \dots + t_{h_{l-1}})$ and the collection $\mathcal{S}_J$ is connected.
It follows that
\begin{align*}
\Min_{\substack{I \subseteq [n_1] \\ |I| = x}}|\cup_{i \in I} S_i| & \le \sum_{j=1}^{l-1} |\cup_{i \in I_{h_j}} S_i | + |\cup_{i \in J} S_i| \\
& = \sum_{j=1}^{l-1} ( \sum_{i \in I_{h_j}} |S_i| - a_{h_j}) + |\cup_{i \in J} S_i| \\
& \substack{(a) \\ \le} \sum_{j=1}^{l-1} ( \sum_{i \in I_{h_j}} |S_i| - a_{h_j}) + \sum_{i \in J} |S_i| - (|J| -1) \\
& =  \sum_{j=1}^{l-1} (  t_{h_j}(r+1) - a_{h_j}) + |J|(r+1) - (|J| -1)\\
& \substack{(b) \\ =} xr +1 - \sum_{j=1}^{l-1} (a_{h_j} - t_{h_j}),
\end{align*}
where (a) is from Corollary \ref{CorConSize} and (b) is from the equality that $|J| = x - (t_{h_1}+ \dots + t_{h_{l-1}})$.
\end{proof}

\subsection{An Integer Programming Based Bound}\label{SubSecIntProBnd}
In this subsection, we derive an integer programming based bound on the minimum distance of any LRC (Theorem \ref{ThmIntBnd}).
Define
\begin{equation}\label{EqIntPro}
\Psi(x) = \Max_{\substack{ s,t_1,\dots,t_s \\ a_1,\dots,a_s}}  \Min_{\;\;l,h_1,\dots,h_l} ( xr+1 - \sum_{i=1}^{l-1}(a_{h_i} - t_{h_i})),\;\; \forall 1 \le x \le n_1,
\end{equation}
where the `Max' is subject to (\ref{EqIntCondOut}) and the `Min' is subject to (\ref{EqIntCondIn}).
Then the value of $\Psi(x)$ is determined only by integers $n_1$ and $n_2$, or equivalently, by $n$ and $r$.

\begin{theorem}\label{ThmIntBnd}
For any $[n,k,d]$ LRC, it holds $\Phi(x) \le \Psi(x)$ for $1 \le x \le n_1$, and
\begin{equation}\label{EqIntBnd}
d \le n-k+1 - \eta,
\end{equation}
where $\eta = \max\{x : \Psi(x) - x < k\}$.
\end{theorem}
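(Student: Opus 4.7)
The plan is to chain together the three lemmas of Sections III-A and III-B and then invoke Theorem \ref{ThmRegDis}. The theorem has two parts: the inequality $\Phi(x)\le\Psi(x)$ and the distance bound; I would prove them in that order, since the second follows almost immediately from the first.

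For the first part, start from an arbitrary $[n,k]$ LRC $\mathcal{C}$. By Lemma \ref{LemCov1}, $\mathcal{C}$ induces an $(r+1)$-cover $\mathcal{S}=\{S_1,\dots,S_t\}$ with $t\ge n_1$ satisfying $\Phi(x)\le \Min_{J\subseteq[t],|J|=x}|\cup_{i\in J}S_i|$ for every $1\le x\le n_1$. If $t>n_1$, apply Lemma \ref{LemCov2} to replace $\mathcal{S}$ by an $(r+1)$-cover $\mathcal{T}=\{T_1,\dots,T_{n_1}\}$ of exactly $n_1$ subsets whose minimum $x$-fold union sizes are at least as large; if $t=n_1$, take $\mathcal{T}=\mathcal{S}$ directly. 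In either case,
\begin{equation*}
\Phi(x)\le \Min_{\substack{I\subseteq[n_1]\\|I|=x}}|\cup_{i\in I}T_i|,\quad 1\le x\le n_1.
\end{equation*}

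Next, associate with $\mathcal{T}$ the integer tuple $(s,t_1,\dots,t_s,a_1,\dots,a_s)$ defined via the connected-component partition of $[n_1]$ described in the remark preceding Lemma \ref{LemIntCond}. By Lemma \ref{LemIntCond} this tuple satisfies the constraints (\ref{EqIntCondOut}), and by Lemma \ref{LemIntProb},
\begin{equation*}
\Min_{\substack{I\subseteq[n_1]\\|I|=x}}|\cup_{i\in I}T_i|\le \Min_{l,h_1,\dots,h_l}\!\Bigl(xr+1-\sum_{i=1}^{l-1}(a_{h_i}-t_{h_i})\Bigr),
\end{equation*}
where the inner `Min' runs over all $l,h_1,\dots,h_l$ satisfying (\ref{EqIntCondIn}). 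Since the specific tuple arising from $\mathcal{C}$ is a particular feasible point of the outer `Max' in the definition (\ref{EqIntPro}) of $\Psi(x)$, replacing it by the maximum over all feasible tuples can only increase the right-hand side, giving $\Phi(x)\le\Psi(x)$.

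For the distance bound, apply Theorem \ref{ThmRegDis}: $d\le n-k+1-\rho$ with $\rho=\max\{x:\Phi(x)-x<k\}$. From $\Phi(x)\le\Psi(x)$ we get $\Phi(x)-x\le\Psi(x)-x$, so $\{x:\Psi(x)-x<k\}\subseteq\{x:\Phi(x)-x<k\}$ and hence $\eta\le\rho$. Therefore $d\le n-k+1-\rho\le n-k+1-\eta$, which is (\ref{EqIntBnd}).

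There is no real obstacle here since all the hard combinatorial work has been done in Lemmas \ref{LemCov1}, \ref{LemCov2}, and \ref{LemIntProb}; the only point worth stating carefully is that the tuple produced from a specific code is feasible for the constraints (\ref{EqIntCondOut}) (this is exactly Lemma \ref{LemIntCond}), so that passing to the `Max' is a legal upper-bound step rather than losing information in the wrong direction.
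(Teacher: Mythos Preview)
Your proposal is correct and follows essentially the same approach as the paper's own proof: chain Lemmas \ref{LemCov1}, \ref{LemCov2}, \ref{LemIntCond}, and \ref{LemIntProb} to obtain $\Phi(x)\le\Psi(x)$, then use $\eta\le\rho$ together with Theorem \ref{ThmRegDis}. Your write-up is in fact slightly more explicit than the paper's, in that you separate the cases $t=n_1$ and $t>n_1$ and spell out why passing from the specific feasible tuple to the outer `Max' is an upper-bound step.
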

\begin{proof}
First, we show that $\Phi(x) \le \Psi(x),\forall 1 \le x \le n_1$.
By Lemma \ref{LemCov1} and Lemma \ref{LemCov2}, there exists an $(r+1)$-cover $\mathcal{T}$ consisting of $n_1$ subsets $\{T_1,\dots,T_{n_1}\}$ such that
\begin{equation*}
\Phi(x) \le \Min_{\substack{J \in [n_1]\\ |J| = x}} |\cup_{j \in J}T_j|, \forall 0\le x \le n_1.
\end{equation*}
Define integers $s,t_1,\dots,t_s, a_1,\dots,a_s$ as in the remark after Corollary \ref{CorConSize}.
By Lemma \ref{LemIntProb} we have
\begin{equation*}
\Phi(x) \le \Min_{l,h_1,\dots,h_l} (xr+1 - \sum_{i=1}^{l-1} (a_{h_i} - t_{h_i})), \forall 1 \le x \le n_1,
\end{equation*}
where the minimum is subject to (\ref{EqIntCondIn}).
Then it follows from Lemma \ref{LemIntCond} that $\Phi(x) \le \Psi(x)$ for $1 \le x \le n_1$.
Therefore $k > \Psi(\eta) - \eta \ge  \Phi(\eta) - \eta$.
We have $\eta \le \rho$, and then by Theorem \ref{ThmRegDis}, the bound (\ref{EqIntBnd}) is obtained.
\end{proof}

\begin{remark}
{\it Difference between the bound (\ref{EqIntBnd}) and the bound in Theorem \ref{ThmRegDis}.~~} The two bounds are of the same form except that the former is determined by $\eta$ and the function $\Psi(x)$ while the latter is determined by $\rho$ and the function $\Phi(x)$. But $\Psi(x)$ is defined for all integers $n$ and $r$ while $\Phi(x)$ is defined with respect to specific regenerating set structure. In other words, given  parameters $n$ and $r$, the bound (\ref{EqIntBnd}) definitely provide an upper bound for any LRC with the parameters $n$ and $r$, but Theorem \ref{ThmRegDis} cannot give a specific bound due to the lack of information about regenerating set structure. Nevertheless, no efficient algorithm has been established for solving the integer programming problem involved in the bound (\ref{EqIntBnd}). But we can solve it by exhaustive search for small $n$ and $r$ as in the example below. Furthermore, we can determine the solution for a wide class of the values of $n$ and $r$ which plays an important role in practical use. The details are in the next section.
\end{remark}

\begin{example}
Suppose $n=13, r=3$,
then $n_1 = 4$ and $n_2 =3$. Because of the assumption $1< r<k$ and the upper bound on the information rate of LRCs, i.e. $\frac{k}{n}\leq \frac{r}{r+1}$, we consider $4 \le k \le 9$.

First,  compute the value of $\Psi(x)$ for $1 \le x \le 4$.
Observe that, up to permutation, all possible integers $s$ and $\{a_i,t_i\}_{i \in [s]}$ satisfying (\ref{EqIntCondOut}) are

\begin{table}[h]
\centering
\begin{tabular}{||c|c|c||c|c|c||}
\hline \hline
\multirow{2}*{$s=1$} & $t_1$ & $a_1$ & \multirow{4}*{$s=2$} & $(t_1,t_2)$ & $(a_1,a_2)$ \\ \cline{2-3} \cline{5-6}
 & $4$ & $3$ & & $(1,3)$ & $(0,3)$ \\ \cline{1-3} \cline{5-6}
\multirow{6}*{$s=3$} & $(t_1,t_2,t_3)$ & $(a_1,a_2,a_3)$ & & $(1,3)$ & $(1,2)$ \\ \cline{2-3} \cline{5-6}
 & $(1,1,2)$ & $(0,0,3)$ & & $(2,2)$ & $(1,2)$ \\ \cline{2-3} \cline{4-6}
 & $(1,1,2)$ & $(0,1,2)$ & \multirow{3}*{$s=4$} & $(t_1,t_2,t_3,t_4)$ & $(a_1,a_2,a_3,a_4)$ \\ \cline{2-3} \cline{5-6}
 & $(1,1,2)$ & $(0,2,1)$ & & $(1,1,1,1)$ & $(0,0,0,3)$ \\ \cline{2-3} \cline{5-6}
 & $(1,1,2)$ & $(1,1,1)$ & & $(1,1,1,1)$ & $(0,0,1,2)$ \\ \cline{2-3} \cline{5-6}
 & & & & $(1,1,1,1)$ & $(0,1,1,1)$ \\ \hline \hline
\end{tabular}
\end{table}

Then by an exhaustive search, we get $\Psi(1) = 4, \Psi(2)=7,\Psi(3)=10,\Psi(4)=13$.
For simplicity, we can write $\Psi(x) = 3 x +1$ for $ 1 \le x \le 4$.

Therefore we have $\eta = \max\{x : \Psi(x) - x < k\} = \max\{x : 2x+1 < k\} = \left\lceil \frac{k-3}{2} \right\rceil$ for $4 \le k \le 9$.
Thus by Theorem \ref{ThmIntBnd},
\begin{equation}\label{EqExmIntBnd}
d \le n-k+1 - \left\lceil \frac{k-3}{2} \right\rceil.
\end{equation}
It gives an explicit upper bound. We compare it with the well known bound, i.e., the bound (\ref{EqGoplanBnd}) given by Gopolan et al. As displayed in Fig. \nolinebreak \ref{FigExample3}, the bound (\ref{EqExmIntBnd}) goes through three points beneath the bound (\ref{EqGoplanBnd}), i.e. $k=6,9$ and $8$, where the former two points have been expected by the impossible condition $(r+1)\nmid n$ and $r\mid k$ (see Example \ref{ExBnd}) but the point $k=8$ is a new impossible result (not included in the impossible results in \cite{song2014optimal}).

\begin{figure}[tbp]
\centering

\includegraphics[width = 0.52 \textwidth]{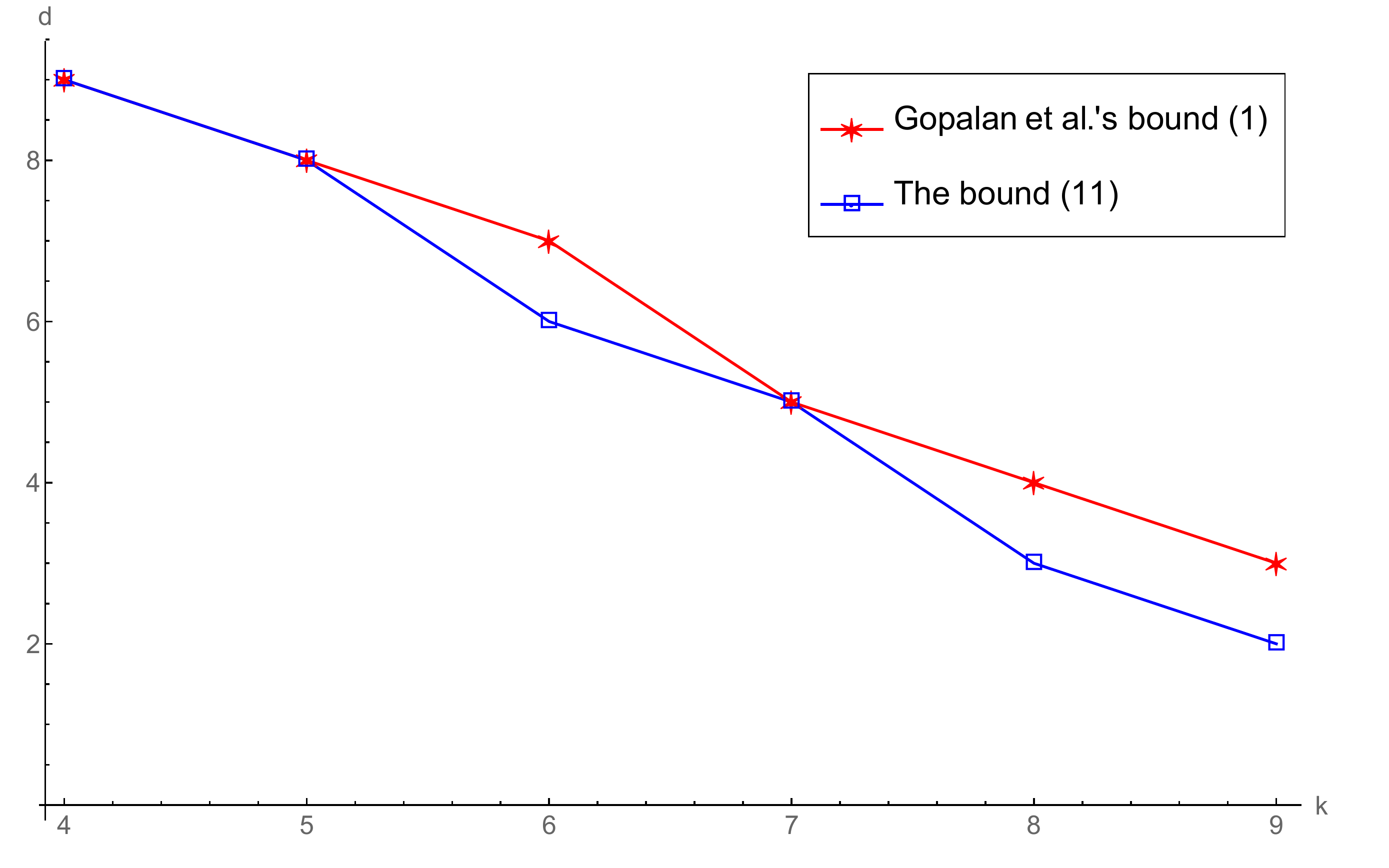}

\caption{Comparison of the two bounds for $n=13, r = 3$.}
\label{FigExample3}
\end{figure}
\end{example}

\section{Explicit Bound for the Case $n_1>n_2$}\label{SecExpBnd}
In this section, for a wide class of parameters, i.e. $n_1>n_2$, we solve the integer programming problem involved in Theorem \ref{ThmIntBnd}, and then derive an explicit upper bound for all LRCs satisfying $n_1>n_2$. Since the condition $n_1>n_2$ can be viewed as a result of $r\leq\sqrt{n}-1$ which is a natural constraint for LRCs to be used in practice, the explicit bound we obtain here is sufficient to cover most practical use. In the second part of this section we make comparisons with all previously known results to show the improvements of our explicit bound.  Actually, in Section \ref{SecCnst} we will show this bound is tight for the case $n_1>n_2$.

\subsection{Bound from Solution of the Integer Programming Problem}
First, Proposition \ref{ProIntN1>N2} determines the value of the function $\Psi(x)$ under the condition $n_1 > n_2$. Then Theorem \ref{ThmBndN1N2} derives an explicit upper bound accordingly.

Denote $\mu = n_1 -n_2$ and let $\lambda, \nu$ be integers such that $n_1 = \lambda \mu +\nu$ and $0 \le \nu < \mu$.

\begin{proposition}\label{ProIntN1>N2}
For $1 \le x \le n_1$,
\begin{equation*}
\Psi(x) = xr+\max\{ \left\lceil \frac{x}{\lambda+1} \right\rceil, \left\lceil \frac{x-\nu}{\lambda} \right\rceil\}.
\end{equation*}
\end{proposition}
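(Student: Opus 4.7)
I would prove the identity by sandwiching $\Psi(x)$ between the claimed value from both sides. The lower bound arises from exhibiting an explicit feasible choice in the outer maximization that attains the value, and the upper bound from showing that every feasible outer choice admits an inner tuple $(l,h_1,\dots,h_l)$ whose objective is at most the same value.

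\textbf{Lower bound.} The outer maximizer I propose is the \emph{balanced} partition with $s=\mu$: take $t_i=\lambda+1$ for $1\leq i\leq \nu$, $t_i=\lambda$ for $\nu<i\leq\mu$, and $a_i=t_i-1$ throughout. Feasibility of (\ref{EqIntCondOut}) follows from $\sum t_i=\nu(\lambda+1)+(\mu-\nu)\lambda=n_1$ and $\sum a_i=n_1-\mu=n_2$, together with $a_i=t_i-1$. Because every $a_i-t_i=-1$, the inner objective collapses to $xr+l$, reducing the inner minimization to a pure covering problem: find the smallest number of $t_i$'s summing to at least $x$. Greedily using the size-$(\lambda+1)$ parts first yields this minimum as $\max\{\lceil x/(\lambda+1)\rceil,\lceil (x-\nu)/\lambda\rceil\}$, giving the matching lower bound on $\Psi(x)$.

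\textbf{Upper bound.} For the reverse direction, I would fix an arbitrary feasible outer $(s,t_i,a_i)$ and set $w_i:=a_i-t_i+1\geq 0$, so that $\sum_i w_i=s-\mu$. The target becomes: exhibit a valid $(l,h_1,\dots,h_l)$ with $l-\sum_{i=1}^{l-1}w_{h_i}\leq L$, where $L:=\max\{\lceil x/(\lambda+1)\rceil,\lceil (x-\nu)/\lambda\rceil\}$. My proposed construction is a two-phase greedy: first load the prefix with all components having $w_i\geq 1$ (each such index contributes at least $+1$ to $\sum w$ and thereby cancels its own $+1$ contribution to $l$), then top up with $w_i=0$ components in order of decreasing $t_i$ until the covering threshold $\sum_{i=1}^l t_{h_i}\geq x$ is met, designating the last component added as $h_l$.

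\textbf{Main obstacle.} The crux is verifying $l-\sum w_{h_i}\leq L$ after this greedy for every feasible partition. Since each $w=0$ addition costs $+1$ in the tracked quantity, the argument must bound how many low-$w$ pieces are required. The key trade-off is that any deviation from the balanced profile either enlarges $s$ beyond $\mu$---yielding $s-\mu$ units of $w$-credit that directly absorb the extra length---or shrinks individual $t_i$'s, which forces more low-$w$ additions but is compensated by the $w$-credit from the additional components. I expect the formal verification to proceed by a case split at $x=\nu(\lambda+1)$ (the crossover between the two ceilings in $L$), together with a counting/pigeonhole bound on the number of components with $t_i\geq\lambda+1$ in any feasible profile.
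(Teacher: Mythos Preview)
Your lower bound is correct and coincides exactly with the paper's: the balanced choice $s=\mu$, $t_i\in\{\lambda,\lambda+1\}$, $a_i=t_i-1$ collapses the inner objective to $xr+l$ and the covering argument gives $l\ge L:=\max\{\lceil x/(\lambda+1)\rceil,\lceil(x-\nu)/\lambda\rceil\}$.

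For the upper bound, your route is genuinely different from the paper's, and the proposal has a real gap. The paper does \emph{not} construct an inner tuple. Instead it argues by contradiction: assuming every valid $(l,h_1,\dots,h_l)$ satisfies $\sum_{i=1}^{l-1}(a_{h_i}-t_{h_i})\le -L$, it introduces the auxiliary weights $b_i=(\lambda+1)a_i-\lambda t_i$ (for $x\le(\lambda+1)\nu$) or $c_i=\lambda a_i-(\lambda-1)t_i$ (for $x>(\lambda+1)\nu$), sorts the components by these weights, and computes $\sum_i b_i$ (resp.\ $\sum_i c_i$) in two ways to force a contradiction. This weighted sorting is the key device that simultaneously controls the $t$-coverage and the $(a-t)$-deficit; it is precisely what your ``counting/pigeonhole bound'' would have to replace.

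Your greedy is in fact the cost-optimal inner construction (each $w\ge1$ component contributes $\le0$ to $l-\sum_{i<l}w_{h_i}$, so front-loading them is never harmful; among $w=0$ components, taking largest $t$ first minimizes the number needed). But the step you label the ``main obstacle'' is the entire content of the upper bound, and you have not proved it. Concretely, after loading all $p$ high-$w$ components with $t$-sum $T_p$, your cost is $p+q-(s-\mu)$ where $q$ is the number of $w=0$ components needed to cover $x-T_p$; you must show this is $\le L$. The informal trade-off (``$s-\mu$ units of $w$-credit absorb the extra length'') points in the right direction, but it does not by itself bound $q$: one needs quantitative control on how the low-$w$ components' $t_i$ values distribute, and this is exactly what the paper's $b_i$/$c_i$ double-counting supplies. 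As written, the upper bound remains a sketch.
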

\begin{proof}
The proof is given in Appendix \ref{AppIntN1>N2}.
\end{proof}

\begin{theorem}\label{ThmBndN1N2}
For any $[n,k,d]$ LRC with $n_1 > n_2$, where $n_1 = \left\lceil \frac{n}{r+1} \right\rceil$ and $n_2 = n_1(r+1) - n$, it holds
\begin{equation}\label{EqBndN1>N2}
d \le n-k+1-\tilde{\eta},
\end{equation}
where $ \tilde{\eta} = \min \{ \left\lceil \frac{(\lambda+1)(k-1)+1}{(\lambda+1)(r-1) + 1} \right\rceil, \left\lceil \frac{\lambda(k-1) + \nu +1}{\lambda(r-1) + 1} \right\rceil\} -1.$
\end{theorem}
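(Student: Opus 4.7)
The plan is to derive the bound as a direct corollary of the two previous results: Theorem \ref{ThmIntBnd} already gives $d \le n-k+1-\eta$ with $\eta = \max\{x:\Psi(x)-x<k\}$, and Proposition \ref{ProIntN1>N2} has already solved the integer program under the hypothesis $n_1>n_2$. So the entire task reduces to showing that $\eta = \tilde{\eta}$, that is, converting the max-with-ceilings formula for $\Psi(x)$ into the closed form stated in the theorem.

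First I would rewrite $\Psi(x)-x$ using Proposition \ref{ProIntN1>N2} as
\[
\Psi(x)-x = x(r-1) + \max\!\Bigl\{\Bigl\lceil\tfrac{x}{\lambda+1}\Bigr\rceil,\;\Bigl\lceil\tfrac{x-\nu}{\lambda}\Bigr\rceil\Bigr\}.
\]
Both summands inside the max are non-decreasing in $x$, and $x(r-1)$ is strictly increasing, so $\Psi(x)-x$ is non-decreasing. Hence $\eta+1$ is simply the smallest positive integer $x$ with $\Psi(x)-x \ge k$. Because the left-hand side contains a maximum, this condition is equivalent to the disjunction
\[
x(r-1)+\Bigl\lceil\tfrac{x}{\lambda+1}\Bigr\rceil \ge k \quad\text{or}\quad x(r-1)+\Bigl\lceil\tfrac{x-\nu}{\lambda}\Bigr\rceil \ge k,
\]
and each of these inequalities is monotone in $x$. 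Therefore $\eta+1 = \min\{x_1^\ast,x_2^\ast\}$, where $x_i^\ast$ denotes the smallest $x$ making the $i$-th inequality hold.

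Next I would evaluate the two thresholds. The key algebraic step is that $x(r-1)$ is an integer, so for any integer $m\ge 1$ and integer $c$,
\[
x(r-1)+\Bigl\lceil\tfrac{x+c}{m}\Bigr\rceil = \Bigl\lceil\tfrac{x(m(r-1)+1)+c}{m}\Bigr\rceil.
\]
Applying this with $(m,c)=(\lambda+1,0)$ and using the standard equivalence $\lceil A/B\rceil \ge k \iff A > B(k-1) \iff A \ge B(k-1)+1$ for positive integer $B$ and integer $A$, one obtains
\[
x_1^\ast = \Bigl\lceil\tfrac{(\lambda+1)(k-1)+1}{(\lambda+1)(r-1)+1}\Bigr\rceil.
\]
The same manipulation with $(m,c)=(\lambda,-\nu)$ gives
\[
x_2^\ast = \Bigl\lceil\tfrac{\lambda(k-1)+\nu+1}{\lambda(r-1)+1}\Bigr\rceil.
\]
Taking the minimum reproduces $\tilde{\eta}+1$, hence $\eta=\tilde{\eta}$, and substituting into Theorem \ref{ThmIntBnd} yields the claimed bound.

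There is no real obstacle beyond bookkeeping: the two substantial ingredients (the integer-programming bound and the explicit value of $\Psi(x)$) are already available, and the remaining work is the ceiling manipulation above. The only point to be careful about is the degenerate regime $x\le\nu$ where $\lceil(x-\nu)/\lambda\rceil$ can be non-positive; in that range the first component of the max automatically dominates, so the equivalence between the disjunction and the max is unaffected. A small sanity check on the edge case $\nu=0$ (where $\lambda=n_1/\mu$ exactly) confirms that the $x_2^\ast$ formula still agrees with the derivation, so no separate case analysis is needed.
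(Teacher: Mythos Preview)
Your proposal is correct and follows essentially the same route as the paper: both reduce the theorem to showing $\eta=\tilde{\eta}$ by combining Theorem~\ref{ThmIntBnd} with Proposition~\ref{ProIntN1>N2}, and then perform an elementary ceiling manipulation. The only cosmetic difference is that the paper works directly with the condition $\Psi(x)-x<k$ (rewriting the max as a conjunction and using $\lceil a\rceil<k\iff a\le k-1$), whereas you pass to the complementary condition $\Psi(x)-x\ge k$ and a disjunction; the resulting formulas for the threshold coincide.
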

\begin{proof}
We prove this by showing $\tilde{\eta}=\eta$, where $\eta$ is defined in Theorem \ref{ThmIntBnd}. Specifically,
\begin{align*}
\eta & = \max \{x : \Psi(x) -x < k\} \\
& = \max \{x : x(r-1)+\max\{ \left\lceil \frac{x}{\lambda+1} \right\rceil, \left\lceil \frac{x-\nu}{\lambda} \right\rceil \} <k\}\\
& = \max \{x : x(r-1) + \frac{x}{\lambda+1} \le k-1 \text{ and } x(r-1) + \frac{x-\nu}{\lambda} \le k-1 \} \\
& = \max \{x : x \le \frac{(\lambda+1)(k-1)}{(\lambda+1)(r-1) + 1} \text{ and } x \le \frac{\lambda(k-1) + \nu}{\lambda(r-1) + 1}\}.
\end{align*}
Thus we have $ \eta = \min \{ \left\lceil \frac{(\lambda+1)(k-1)+1}{(\lambda+1)(r-1) + 1} \right\rceil, \left\lceil \frac{\lambda(k-1) + \nu +1}{\lambda(r-1) + 1} \right\rceil\} -1=\tilde{\eta}$, and the statement follows directly from Theorem \ref{ThmIntBnd}.
\end{proof}

\begin{example}
Let $\mathcal{C}$ be an $[n,k]$ LRC with $(r+1) \mid n$.
We have $n_1 = \frac{n}{r+1}, n_2 = 0$, and therefore $\mu = n_1, \nu = 0, \lambda = 1$.
Then it follows from Theorem \ref{ThmBndN1N2} that $\tilde\eta = \min \{ \left\lceil \frac{2k-1}{2r-1} \right\rceil, \left\lceil \frac{k}{r} \right\rceil\} -1 = \left\lceil \frac{k}{r} \right\rceil -1$ and
$$ d \le n-k+1 - (\left\lceil \frac{k}{r} \right\rceil -1),$$
which coincides with the bound (\ref{EqGoplanBnd}).
\end{example}

\subsection{Improvements of the Bound}\label{SubSecImprovement}
Since the bound (\ref{EqBndN1>N2}) in Theorem \ref{ThmBndN1N2} holds for $n_1>n_2$, all the comparisons we make below are under the condition $n_1>n_2$.

\subsubsection{Comparison with Gopolan et al's Bound}

The bound (\ref{EqGoplanBnd}) given  by Gopalan et al. \cite{gopalan2012locality} is the first upper bound on the minimum distance of LRCs. It states
$$ d \le n-k+1 - (\left\lceil \frac{k}{r} \right\rceil -1)\;.$$
Because $n_1>n_2$, it follows $\lambda \ge 1$ and $\nu \ge 0$. Then along with the assumption $1<r<k$, a detailed calculation shows that $\tilde{\eta}\geq\left\lceil \frac{k}{r} \right\rceil -1$. Therefore, the bound (\ref{EqBndN1>N2}) generally provides a tighter upper bound than the bound (\ref{EqGoplanBnd}). Actually, the former bound is strictly tighter than the latter at many points. The left graph of Fig. \nolinebreak \ref{FigLocalityR3Bounds} gives a comparison of the two bounds for $n=101,r=9$.

\begin{figure*}[t]
\centering
\subfloat[]{\includegraphics[width=0.5\textwidth]{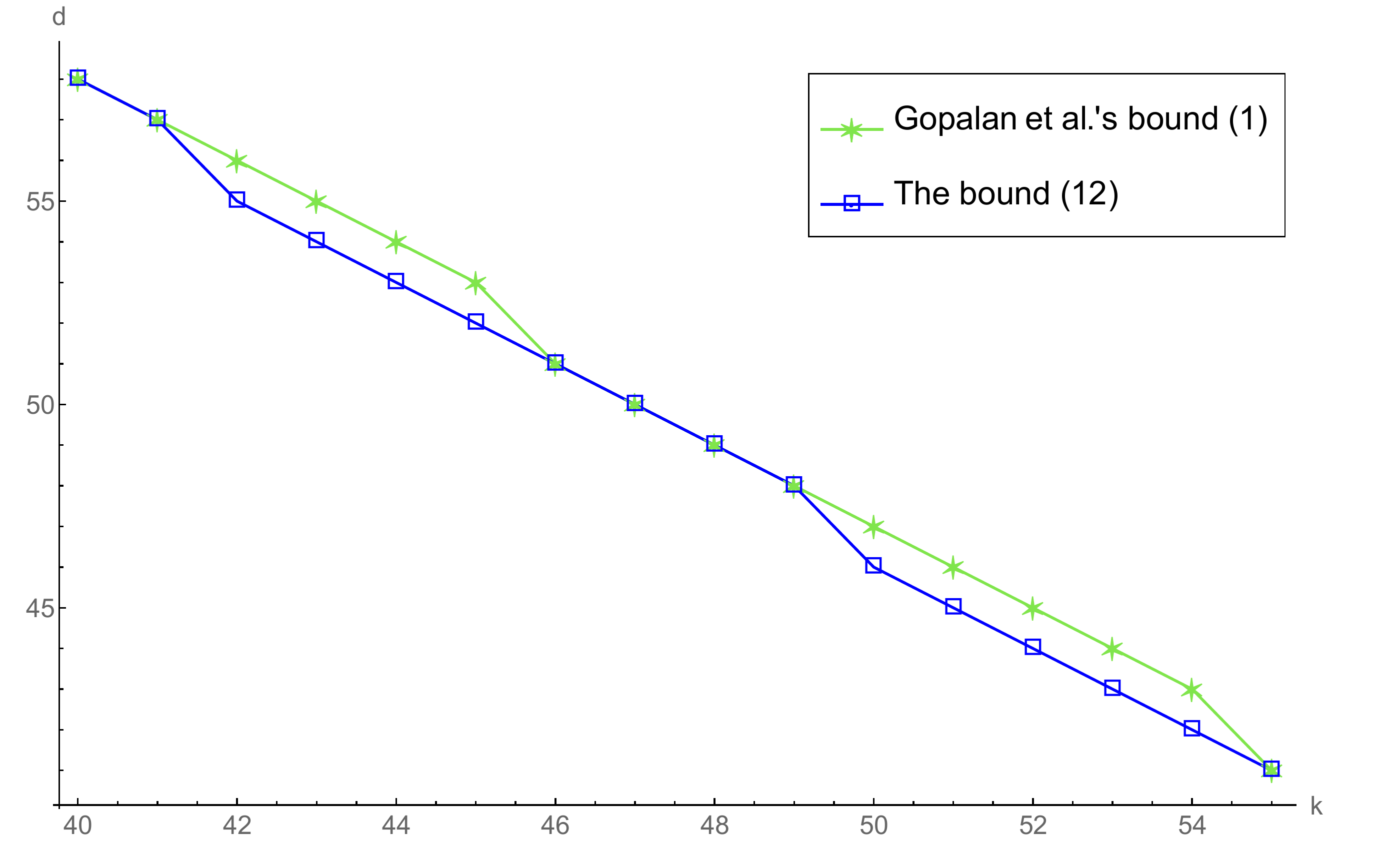}%
\label{FigLocalityR3Bounds_a}}
\hfil
\subfloat[]{\includegraphics[width=0.5\textwidth]{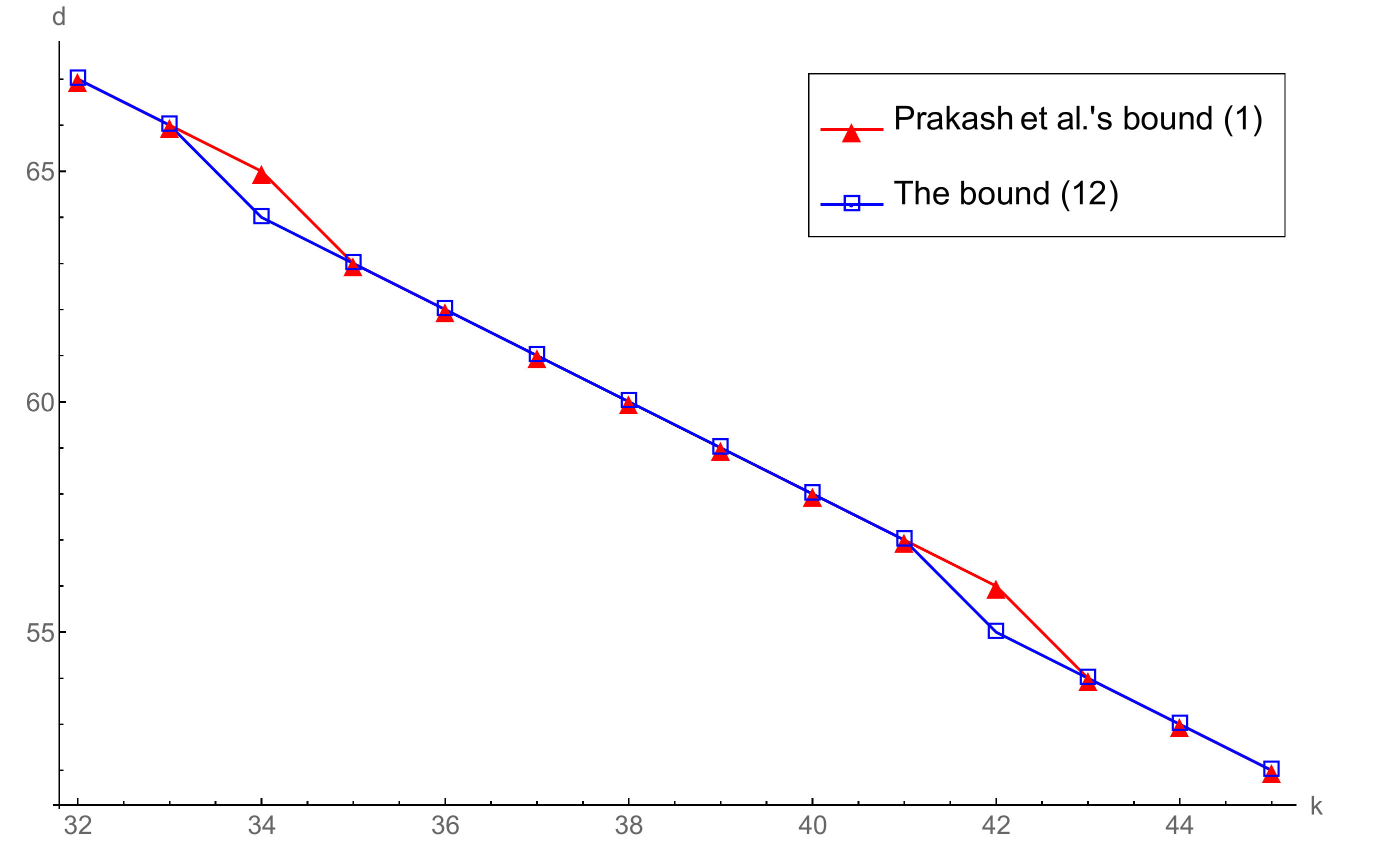}%
\label{FigLocalityR3Bounds_b}}
\caption{Comparison of the three bounds for $n=101, r = 9$.}
\label{FigLocalityR3Bounds}
\end{figure*}

\subsubsection{Comparison with Prakash et al's Bound}
Recently, Prakash et al. \cite{prakash2014codes} derived an improved upper bound on the minimum distance, i.e.,
\begin{equation}\label{EqBndKum}
d \le n-k+1 -l,
\end{equation}
where $l$ is the unique integer satisfying $e_l < k+l < e_{l+1}$ and $\{e_m\}_{m \in [n_1]}$ is defined recursively as below,
$$ e_{n_1} = n \text{~~and~~} e_{m-1} = e_m - \left\lceil \frac{2e_m}{m} \right\rceil +(r+1) \text{ for } 2 \le m \le n_1.$$
It was proved in \cite{prakash2014codes} that the bound (\ref{EqBndKum}) improves the bound (\ref{EqGoplanBnd}). We claim that the bound (\ref{EqBndN1>N2}) in Theorem \ref{ThmBndN1N2} further improves the bound (\ref{EqBndKum}).
Generally, observe that $\eta = \max \{x : \Psi(x) - x <k\}$ and the definition of $l$ is equivalent to $l = \max \{ m | e_m - m < k\}$.
Then the claim follows from the fact that
\begin{equation}\label{EqBndComp}
e_m \ge \Psi(m), \;\;\forall\; 1 \le m \le n_1.
\end{equation}
We prove (\ref{EqBndComp}) by induction on $m$.

First for $m= n_1$, $\Psi(n_1) = n_1 r + \mu = n = e_{n_1}$.
Then suppose the argument holds for $m+1$, i.e., $e_{m+1} \ge \Psi(m+1)$, where $m<n_1$. Thus
\begin{align*}
e_m & = e_{m+1} - \left\lceil \frac{2 e_{m+1}}{m+1} \right\rceil +(r+1) \\
& = \left\lfloor \frac{m-1}{m+1} e_{m+1} \right\rfloor + (r+1) \\
& \ge \left\lfloor \frac{m-1}{m+1} \Psi(m+1) \right\rfloor + (r+1) \\
& = \left\lfloor \frac{m-1}{m+1} ((m+1)r+\max\{ \left\lceil \frac{m+1}{\lambda+1} \right\rceil, \left\lceil \frac{m+1-\nu}{\lambda} \right\rceil\}) \right\rfloor + (r+1) \\
& = mr+1 + \left\lfloor \frac{m-1}{m+1} \max\{ \left\lceil \frac{m+1}{\lambda+1} \right\rceil, \left\lceil \frac{m+1-\nu}{\lambda} \right\rceil\} \right\rfloor \\
& \ge mr+1 + \left\lfloor \max \{  \frac{m-1}{\lambda+1} , \frac{m-1-\nu}{\lambda} \} \right\rfloor \\
& = mr +1 + \max \{ \left\lceil \frac{m}{\lambda+1} \right\rceil -1, \left\lceil \frac{m-\nu}{\lambda} \right\rceil -1 \} \\
& = \Psi(m).
\end{align*}
The above proof shows that the bound (\ref{EqBndN1>N2}) cannot go upon the bound (\ref{EqBndKum}). A detailed calculation with specific values of $n,k,r$ shows the former bound does go beneath the latter bound at some points.
As an illustration, the right graph in Fig. \nolinebreak \ref{FigLocalityR3Bounds} plots the two bounds for $n =101, r =9$.

\textbf{}

\subsubsection{Comparing with the Results of Song et al }
In \cite{song2014optimal}, Song et al. derived some conditions under which there exists {\it no} LRC attaining the bound (\ref{EqGoplanBnd}), and also proved the existence of LRCs attaining the bound (\ref{EqGoplanBnd}) under some conditions.
However, they left some scope of parameters under which it was unknown whether there exist LRCs attaining the bound (\ref{EqGoplanBnd}).

In Section \ref{SecCnst} of this paper, we will give an explicit construction of $[n,k]$ LRCs for $n_1 > n_2$, attaining the bound (\ref{EqBndN1>N2}) in Theorem \ref{ThmBndN1N2}.
Therefore our bound (\ref{EqBndN1>N2}) completely describes the largest possible minimum distance for LRCs with $n_1 > n_2$.

Fig. \nolinebreak \ref{FigCompSong} illustrates the corresponding results for $n=50,10 \le k \le 17$ and $2 \le r \le 9$. In the tables `Y' means there exist LRCs attaining the bound (\ref{EqGoplanBnd}),  `N' means there is {\it no} LRC attaining the bound (\ref{EqGoplanBnd}), and a blank means it is unknown whether there exist LRCs attaining the bound (\ref{EqGoplanBnd}).

\begin{figure}[ht]
\centering
\begin{tikzpicture}[scale=0.8]

\foreach \x / \y in {1/1,2/1,3/1,4/1,5/1,6/1,7/1,8/1, 1/2,2/2,3/2,4/2,5/2,8/2, 1/3,2/3,3/3,6/3,7/3,8/3, 1/4,3/4,4/4,5/4, 2/5,3/5,4/5,5/5,7/5,8/5, 1/6,2/6,3/6,4/6,5/6,6/6,7/6,8/6, 1/7,4/7}
{
\filldraw [green!30] (\x, 0.7*\y-0.7) rectangle (\x+1,0.7*\y);
\node at (\x+.5,0.7*\y-0.7*.5) {\small Y};
}

\foreach \x / \y in {7/2,5/3,3/4, 1/5,6/5, 2/7,3/7,5/7,6/7,8/7, 1/8,3/8,5/8,7/8}
{
\filldraw [red!30] (\x, 0.7*\y-0.7) rectangle (\x+1,0.7*\y);
\node at (\x+.5,0.7*\y-0.7*.5) {\small N};
}

\foreach \x in {0,1,2,3,4,5,6,7,8,9}
{
\draw[line width=1pt] (0,0.7*\x) -- (9,0.7*\x);
\draw[line width=1pt] (\x,0) -- (\x,0.7*9);
}

\draw (0,0.7*9) -- (1,0.7*8);
\node [above right] at (0.5,0.7*8.5-0.2) {\small $k$};
\node [below left] at (0.5,0.7*8.5+0.1) {\small $r$};

\foreach \x in {10,11,12,13,14,15,16,17}
{
\node at (\x-9+0.5,0.7*8.5) {\small $\x$};
}

\foreach \x in {2,3,4,5,6,7,8,9}
{
\node at (0.5,0.7*9.5-0.7*\x) {\small $\x$};
}

\node at (4.5,6.8) {\small Results of Song et al.};


\foreach \x / \y in {1/1,2/1,3/1,4/1,5/1,6/1,7/1,8/1, 1/2,2/2,3/2,4/2,5/2,6/2,8/2, 1/3,2/3,3/3,4/3,6/3,7/3,8/3, 1/4,2/4,4/4,5/4,6/4,7/4, 2/5,3/5,4/5,5/5,7/5,8/5, 1/6,2/6,3/6,4/6,5/6,6/6,7/6,8/6, 1/7,4/7,7/7, 2/8,4/8,6/8,8/8}
{
\filldraw [xshift=10cm,green!30] (\x, 0.7*\y-0.7) rectangle (\x+1,0.7*\y);
\node at (\x+.5+10,0.7*\y-0.7*.5) {\small Y};
}

\foreach \x / \y in {7/2,5/3,3/4,8/4, 1/5,6/5, 2/7,3/7,5/7,6/7,8/7, 1/8,3/8,5/8,7/8}
{
\filldraw [xshift=10cm,red!30] (\x, 0.7*\y-0.7) rectangle (\x+1,0.7*\y);
\node at (\x+.5+10,0.7*\y-0.7*.5) {\small N};
}

\foreach \x in {0,1,2,3,4,5,6,7,8,9}
{
\draw[xshift=10cm,line width=1pt] (0,0.7*\x) -- (9,0.7*\x);
\draw[xshift=10cm,line width=1pt] (\x,0) -- (\x,0.7*9);
}

\draw [xshift=10cm] (0,0.7*9) -- (1,0.7*8);
\node [above right] at (0.5+10,0.7*8.5-0.2) {\small $k$};
\node [below left] at (0.5+10,0.7*8.5+0.1) {\small $r$};

\foreach \x in {10,11,12,13,14,15,16,17}
{
\node at (\x-9+0.5+10,0.7*8.5) {\small $\x$};
}

\foreach \x in {2,3,4,5,6,7,8,9}
{
\node at (0.5+10,0.7*9.5-0.7*\x) {\small $\x$};
}

\node at (4.5+10,6.8) {\small Results of this paper};

\end{tikzpicture}
\caption{A comparison of Song et al.'s results and our results for $n=50$.}
\label{FigCompSong}
\end{figure}

\section{Code Construction When $n_1>n_2$}\label{SecCnst}
In this section, we present an explicit construction of LRCs attaining the bound (\ref{EqBndN1>N2}).
The construction is based on linearized polynomials.
We start this section with some basic facts about linearized polynomials.

\subsection{The Linearized Polynomial}
\begin{definition} A polynomial of the form $f(x) = \sum_{i=0}^t a_i x^{q^i}$ with coefficients $a_i \in \mathbb{F}_{q^m}$ for $0\leq i\leq t$ and $a_t \neq 0$ is called a linearized polynomial of $q$-degree $t$ over the extension field $\mathbb{F}_{q^m}$.
\end{definition}

A linearized polynomial $f(x)$ can be viewed as an $\mathbb{F}_q$-linear transformation from $\mathbb{F}_{q^m}$ to itself, i.e., for any $c_1,c_2 \in \mathbb{F}_q$ and $\omega_1,\omega_2 \in \mathbb{F}_{q^m}$, it holds $f(c_1 \omega_1 + c_2 \omega_2) = c_1 f(\omega_1) + c_2 f(\omega_2)$.
Furthermore, a standard result of finite fields states that,

\begin{proposition}{\upshape \cite{lidl1997finite}}\label{PropFiniteField}
A linearized polynomial $f(x)$ of $q$-degree no more than $t$ can be uniquely determined by the values of $f(\omega_1),\dots,f(\omega_{t+1})$, where $\omega_1,\dots,\omega_{t+1}$ are $t+1$ elements in $\mathbb{F}_{q^m}$ that are linearly independent over $\mathbb{F}_q$.
\end{proposition}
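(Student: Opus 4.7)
The plan is to establish this proposition by proving both uniqueness and existence of the interpolating linearized polynomial, exploiting the fact (already noted in the excerpt immediately before the statement) that such a polynomial acts as an $\mathbb{F}_q$-linear map from $\mathbb{F}_{q^m}$ to itself.

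For uniqueness, I would suppose two linearized polynomials $f_1, f_2$ of $q$-degree at most $t$ agree on $\omega_1, \dots, \omega_{t+1}$, and set $g = f_1 - f_2$. Then $g$ is again a linearized polynomial of $q$-degree at most $t$, and it vanishes at each $\omega_i$. Because $g$ acts $\mathbb{F}_q$-linearly on $\mathbb{F}_{q^m}$, it must vanish on the entire $\mathbb{F}_q$-span $V$ of $\omega_1,\dots,\omega_{t+1}$. Since the $\omega_i$ are $\mathbb{F}_q$-linearly independent, $|V| = q^{t+1}$. But $g$, viewed as an ordinary polynomial in $x$, has degree at most $q^t < q^{t+1}$, so over the field $\mathbb{F}_{q^m}$ it cannot have more than $q^t$ distinct roots unless it is identically zero. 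Hence $g \equiv 0$ and $f_1 = f_2$.

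For existence, I would use a dimension argument. The collection of linearized polynomials of $q$-degree at most $t$ over $\mathbb{F}_{q^m}$ forms an $\mathbb{F}_{q^m}$-vector space of dimension $t+1$, with basis $\{x, x^q, \dots, x^{q^t}\}$. The evaluation map $\Lambda \colon f \mapsto (f(\omega_1), \dots, f(\omega_{t+1}))$ into $\mathbb{F}_{q^m}^{t+1}$ is $\mathbb{F}_{q^m}$-linear, and the uniqueness step above shows it is injective. Since source and target have equal finite dimension over $\mathbb{F}_{q^m}$, $\Lambda$ is bijective, which yields existence of an interpolant for any prescribed tuple of values and simultaneously reconfirms uniqueness.

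The main point in the argument, such as it is, is the elementary but essential degree comparison $q^t < q^{t+1}$: this is what makes the $\mathbb{F}_q$-linearity so powerful here, because the single relation ``$g$ vanishes at $t+1$ linearly independent points'' is leveraged into vanishing on exponentially many points, far more than the ordinary degree permits. An equivalent and perhaps more concrete route to existence is to verify directly that the Moore matrix $(\omega_i^{q^j})_{1 \le i \le t+1,\; 0 \le j \le t}$ has nonzero determinant whenever $\omega_1,\dots,\omega_{t+1}$ are $\mathbb{F}_q$-linearly independent, a standard identity for finite fields; the linear system for the coefficients $a_0,\dots,a_t$ then has a unique solution for every right-hand side.
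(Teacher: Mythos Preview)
Your proof is correct. Both the uniqueness argument (via the root count $q^{t+1} > q^t$) and the existence argument (via dimension comparison, or equivalently the Moore determinant) are standard and sound.

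However, there is nothing to compare against: the paper does not supply its own proof of this proposition. It is stated with a citation to Lidl's \emph{Finite Fields} and used as a black box. So your write-up is not an alternative to the paper's argument but rather a self-contained justification of a fact the authors chose to import. That is entirely reasonable, and your argument is the canonical one.
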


\subsection{An Explicit Code Construction}\label{SubSecExplicitCnst}
In this subsection, we assume $n_1 > n_2$ and construct an $[n,k]$ LRC over $\mathbb{F}_{q^m}$ attaining the bound (\ref{EqBndN1>N2}) in Theorem \ref{ThmBndN1N2}, where $\mathbb{F}_{q^m}$ is an extension field of $\mathbb{F}_{q}$ with $m \ge n_1 r$.
In a word, the codewords are obtained as evaluations of a linearized polynomial at $n$ points in $\mathbb{F}_{q^m}$. Because of the property of linearized polynomials introduced in Proposition \ref{PropFiniteField}, the key point of the code construction is the selection of the $n$ evaluation points such that the resulting code has the largest possible minimum distance. Denote the set of the $n$ evaluation points by $\Omega$.

Since $\mathbb{F}_{q^m}$ can be viewed as an $\mathbb{F}_q$-linear space of dimension $m$, by fixing a basis of $\mathbb{F}_{q^m}$ over $\mathbb{F}_q$, the $n$ elements in $\Omega$ can be expressed as $n$ vectors of length $m$ over $\mathbb{F}_q$.
These $n$ vectors are determined through the following three steps.
For simplicity, we can set $q=2$ and $m=n_1 r$, and the process below also works for other values of $q$ and $m$.

{\it Step 1.} Let $X=(\vec{x}_0,\vec{x}_1,\dots,\vec{x}_r)$ be the generator matrix of an $[r+1,r]_2$ MDS code and let $\vec{c}=(1,c_1,\dots,c_r)$ be one of its codeword, where $\vec{x}_i \in \mathbb{F}_{2}^{\;r}$ for $0 \le i \le r$.
For example, we can choose
\begin{equation*}
X =
\begin{pmatrix}
1 & 0 & \dots & 0 & 1 \\
0 & 1 & \dots & 0 & 1 \\
\vdots & \vdots & \ddots & \vdots \\
0 & 0 & \cdots & 1 & 1
\end{pmatrix}
\text{ and } \vec{c} = (1,0,\dots,0,1).
\end{equation*}

{\it Step 2.} Define vectors $\vec{\alpha}_0,\vec{\alpha}_{i,j} \in \mathbb{F}_2^{(\lambda+1)r}$ for $1 \le i \le \lambda+1$ and $1 \le j \le r$, where
\begin{equation*}
\vec{\alpha}_0 = \begin{pmatrix}\vec{x}_0 \\ \vec{x}_0 \\ \vdots \\ \vec{x}_0 \end{pmatrix} \text{ and } \vec{\alpha}_{i,j} = \begin{pmatrix}c_j\vec{x}_0 \\ \vdots \\ \vec{x}_j \\ \vdots \\ c_j\vec{x}_0 \end{pmatrix}\;,
\end{equation*}
that is, $\vec{\alpha}_0$ consists of $(\lambda+1)$ $\vec{x}_0$'s and $\vec{\alpha}_{i,j}$ is defined by replacing the $i$-th $c_j \vec{x}_0$ of $c_j \vec{\alpha}_0$ with an $\vec{x}_j$.
Similarly, define $\vec{\beta}_0,\vec{\beta}_{i,j} \in \mathbb{F}_2^{\lambda r}$, $1 \le i \le \lambda$ and $1 \le j \le r$, such that $\vec{\beta}_0$ consists of $\lambda$ $\vec{x}_0$'s and $\vec{\beta}_{i,j}$ is defined by replacing the $i$-th $c_j \vec{x}_0$ of $c_j \vec{\alpha}_0$ with an $\vec{x}_j$.
For example, let $r=2,\lambda=2$ and $X=(\vec{x}_0,\vec{x}_1,\vec{x}_2)$, $\vec{c}=(1,c_1,c_2)$, then we have
\begin{align*}
(\vec{\alpha}_0, \vec{\alpha}_{1,1}, \vec{\alpha}_{1,2}, \vec{\alpha}_{2,1}, \vec{\alpha}_{2,2}, \vec{\alpha}_{3,1}, \vec{\alpha}_{3,2} )=
\begin{pmatrix}
\vec{x}_0 & \vec{x}_1 & \vec{x}_2 & c_1\vec{x}_0 & c_2\vec{x}_0 & c_1\vec{x}_0 & c_2\vec{x}_0 \\
\vec{x}_0 & c_1\vec{x}_0 & c_2\vec{x}_0 & \vec{x}_1 & \vec{x}_2 & c_1\vec{x}_0 & c_2\vec{x}_0 \\
\vec{x}_0 & c_1\vec{x}_0 & c_2\vec{x}_0 & c_1\vec{x}_0 & c_2\vec{x}_0 & \vec{x}_1& \vec{x}_2
\end{pmatrix}
\end{align*}
and
\begin{align*}
(\vec{\beta}_0, \vec{\beta}_{1,1}, \vec{\beta}_{1,2}, \vec{\beta}_{2,1}, \vec{\beta}_{2,2})=
\begin{pmatrix}
\vec{x}_0 & \vec{x}_1 & \vec{x}_2 & c_1\vec{x}_0 & c_2\vec{x}_0 \\
\vec{x}_0 & c_1\vec{x}_0 & c_2\vec{x}_0 & \vec{x}_1 & \vec{x}_2
\end{pmatrix}.
\end{align*}
The vectors $\vec{\alpha}_0,\vec{\alpha}_{i,j}$ and $\vec{\beta}_0,\vec{\beta}_{i,j}$ defined above have the following properties.

\begin{lemma}\label{LemAlphaBeta}
Denote $\mathcal{A}_i = \{\vec{\alpha}_0,\vec{\alpha}_{i,1},\dots,\vec{\alpha}_{i,r}\}$ for $1 \le i \le \lambda+1$.
Then we have
\begin{itemize}
\item[(i)] For $1 \le i \le \lambda+1$, each vector contained in $\mathcal{A}_i$ is an $\mathbb{F}_q$-linear combination of the other $r$  vectors in $\mathcal{A}_i$.
\item[(ii)] For any $F \subseteq \cup_{i=1}^{\lambda+1} \mathcal{A}_i$ satisfying that there exists a vector $\vec{\alpha} \in F$ such that $|(F - \{\vec{\alpha}\}) \cap \mathcal{A}_i| \le r-1$ for $1 \le i \le \lambda+1$, the vectors in $F$ are $\mathbb{F}_q$-linearly independent.
\end{itemize}
Denote $\mathcal{B}_j = \{\vec{\beta}_0,\vec{\beta}_{j,1},\dots,\vec{\beta}_{j,r}\}$ for $1 \le j \le \lambda$.
Then the same statements also hold for $\mathcal{B}_j$ for $1 \le j \le \lambda$.
\end{lemma}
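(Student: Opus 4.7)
The plan is to analyze linear combinations of vectors from $\bigcup_{i=1}^{\lambda+1} \mathcal{A}_i$ block by block, leveraging two ingredients: the MDS property of $X$ (any $r$ of $\vec{x}_0, \ldots, \vec{x}_r$ are $\mathbb{F}_2$-linearly independent, and the unique nonzero relation is $\vec{x}_0 + \cdots + \vec{x}_r = 0$) and the codeword identity $1 + c_1 + \cdots + c_r = 0$ coming from $\vec{c}$ being in the code. The ambient space $\mathbb{F}_2^{(\lambda+1)r}$ is naturally partitioned into $\lambda+1$ blocks of $r$ coordinates, so every relation will be read off one block at a time.

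For part (i), I would produce the explicit relation exhibiting the dependence in $\mathcal{A}_i$. Consider $\vec{\alpha}_0 + \sum_{j=1}^r \vec{\alpha}_{i,j}$. In block $i$ it collapses to $\vec{x}_0 + \vec{x}_1 + \cdots + \vec{x}_r = 0$, while in each of the other $\lambda$ blocks it collapses to $(1 + c_1 + \cdots + c_r)\vec{x}_0 = 0$. Hence the sum is zero and every coefficient is nonzero, so every vector in $\mathcal{A}_i$ is determined by the remaining $r$.

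For part (ii), suppose $\sum_{v \in F}\lambda_v v = 0$. Restricting to block $i$ gives an $\mathbb{F}_2$-combination $\mu_{i,0}\vec{x}_0 + \mu_{i,1}\vec{x}_1 + \cdots + \mu_{i,r}\vec{x}_r = 0$, where $\mu_{i,j} = \lambda_{i,j}$ if $\vec{\alpha}_{i,j} \in F$ and $\mu_{i,j} = 0$ otherwise, and $\mu_{i,0} = \lambda_0 + \sum_{i' \ne i}\sum_j \lambda_{i',j} c_j$. Since $\sum_{j=0}^r \vec{x}_j = 0$ is the only nonzero relation among the $\vec{x}_j$, each block is either null (all $\mu_{i,j} = 0$) or full (all $\mu_{i,j} = 1$, which forces $\{\vec{\alpha}_{i,1}, \ldots, \vec{\alpha}_{i,r}\} \subseteq F$).

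I then use the hypothesis on $F$ and $\vec{\alpha}$ to eliminate the full case. If $\vec{\alpha} = \vec{\alpha}_0$, the bound $|(F\setminus\{\vec{\alpha}_0\}) \cap \mathcal{A}_i| \le r - 1$ directly forbids containing all of $\{\vec{\alpha}_{i,1},\ldots,\vec{\alpha}_{i,r}\}$, so every block is null. If $\vec{\alpha} = \vec{\alpha}_{i_0,j_0}$, the same bound rules out fullness in every block $i \ne i_0$, and fullness in $i_0$ would require $\vec{\alpha}_0 \notin F$. The main obstacle is this residual sub-case where only block $i_0$ is full: substituting $\lambda_0 = 0$ (since $\vec{\alpha}_0 \notin F$) and $\lambda_{i_0,j} = 1$ for all $j$ into the null equation $\mu_{i,0} = 0$ for any $i \ne i_0$ yields $\sum_{j=1}^r c_j = 0$, contradicting the parity identity $\sum_{j=1}^r c_j = 1$. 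Hence every block is null, $\lambda_{i,j} = 0$ for every $(i,j)$, and then any remaining $\mu_{i,0} = 0$ equation forces $\lambda_0 = 0$, completing the proof of independence. The statements for the $\mathcal{B}_j$ family follow by the same argument with $\lambda$ blocks replacing $\lambda+1$.
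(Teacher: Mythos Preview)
Your proof is correct and follows essentially the same block-by-block approach as the paper: both arguments restrict a putative linear relation to each ``thick row'' and use the MDS property of $X$ together with the codeword identity for $\vec{c}$, splitting into the same two cases according to whether the distinguished vector is $\vec{\alpha}_0$ or some $\vec{\alpha}_{i_0,j_0}$. The only cosmetic difference is in the residual sub-case of Case~2: the paper observes directly that $|F\cap\mathcal{A}_{i_0}|\le r$, so the restriction to block $i_0$ involves at most $r$ of the $\vec{x}_j$'s and MDS finishes, whereas you take the equally valid route of reading off $\sum_j c_j=0$ from a neighboring null block and contradicting the parity identity $1+c_1+\cdots+c_r=0$.
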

\begin{proof}
The proof is given in Appendix \ref{AppPrfLemAlpBet}.
\end{proof}

{\it Step 3.} Let $A$ be the matrix consisting of the $((\lambda+1)r+1)$ column vectors in $\cup_{i=1}^{\lambda+1} \mathcal{A}_i$, and let $B$ be the matrix consisting of the $(\lambda r+1)$ column vectors in $\cup_{i=1}^{\lambda} \mathcal{B}_i$.
Define a block diagonal matrix
\begin{equation*}
W= \begin{pmatrix}
A & & & & & & \\
 &\ddots & & & & & \\
 & & A & & & & \\
 & & & B & & & \\
 & & & & \ddots & \\
 & & & & & B
\end{pmatrix}
\end{equation*}
which is composed of $\nu$ $A$'s and $(\mu-\nu)$ $B$'s on the diagonal and zeros eleswhere.
Note that $A$ has $((\lambda+1)r +1)$ columns and $B$ has $(\lambda r +1)$ columns, then $W$ has $((\lambda+1)r +1)\nu + (\lambda r +1)(\mu-\nu) = n_1 (r+1) - n_2 = n$ columns.
Similarly, $W$ has $\nu(\lambda+1)r +(\mu-\nu)\lambda r =(\lambda \mu + \nu)r = n_1 r$ rows.
Then the set of $n$ vectors in $\Omega$ are defined to be the $n$ columns of $W$.

We give a graphical explanation of linear dependences among the $n$ vectors. Refer to Fig. \nolinebreak \ref{FigOmegaStructure}, each point actually corresponds to a vector. Then the left $\nu$ trees each composed of $\lambda+1$ branches corresponds to the $\nu$ blocks of $A$ in $W$, and the right $\mu-\nu$ trees each composed of $\lambda$ branches corresponds to the $\mu-\nu$ blocks of $B$ in $W$. In more detail, the set $\mathcal{A}_i$ for $1\leq i\leq \lambda+1$ corresponds to a branch in the left trees and particularly the vector $\vec{\alpha}_0$ corresponds to the root point. The similar correspondence holds for $\mathcal{B}_i$ and the branches in the right trees.

\begin{figure}[ht]
\centering
\begin{tikzpicture}[scale=1]

\filldraw[blue]
(0,0) circle (1pt);

\foreach \x in {1,1/4,2/4,3/4}
{\filldraw[blue] (-\x,-3 * \x) circle (1pt);
\filldraw[blue] (\x,-3 * \x) circle (1pt);
\filldraw[blue] (-3.1/6 * \x,-3.1 * \x) circle (1pt);}

\node [above left] at (-0.7/4,-3.7/4) { \tiny $\omega^{(1)}_{1,1}$};
\node [left] at (-0.9,-3) { \tiny $\omega^{(1)}_{1,r}$};
\node [below] at (-3.1/24 + 0.16,-3.1/4) {\tiny $\omega^{(1)}_{2,1}$};
\node [right] at (-3.1/6,-3.1) {\tiny $\omega^{(1)}_{2,r}$};
\node [right] at (1/4,-3/4) {\tiny $\omega^{(1)}_{\lambda+1,1}$};
\node [right] at (1,-3) {\tiny $\omega^{(1)}_{\lambda+1,r}$};

\draw
(0,0) -- (-1,-3)
(0,0) -- (1,-3)
(0,0) -- (-3.1/6,-3.1);

\node at (.2,-2) {$\cdots$};
\node [above] at (0,0) {\tiny $\omega^{(1)}_0$};
\node at (-1.2,-3.5) {\footnotesize $W^{(1)}_1$};
\node at (-0.4,-3.65) {\footnotesize $W^{(1)}_2$};
\node at (1.1,-3.5) {\footnotesize $W^{(1)}_{\lambda+1}$};

\node at (1.8,-1.2) {\large $\cdots$};
\filldraw[blue]
(0+3.6,0) circle (1pt);

\foreach \x in {1,1/4,2/4,3/4}
{\filldraw[blue] (-\x+3.6,-3 * \x) circle (1pt);
\filldraw[blue] (\x+3.6,-3 * \x) circle (1pt);
\filldraw[blue] (-3.1/6 * \x+3.6,-3.1 * \x) circle (1pt);}

\draw
(0+3.6,0) -- (-1+3.6,-3)
(0+3.6,0) -- (1+3.6,-3)
(0+3.6,0) -- (-3.1/6+3.6,-3.1);

\node at (.2+3.6,-2) {$\cdots$};
\node [above] at (0+3.6,0) {\tiny $\omega^{(\nu)}_0$};
\node at (-1.2+3.6,-3.5) {\footnotesize $W^{(\nu)}_1$};
\node at (-0.4+3.6,-3.65) {\footnotesize $W^{(\nu)}_2$};
\node at (1.1+3.6,-3.5) {\footnotesize $W^{(\nu)}_{\lambda+1}$};

\node at (1.7,-4.2) {\tiny $\underbrace{\hspace*{110pt}}$};
\node at (1.7,-4.8) {\footnotesize $\nu$ trees each composed of $\lambda+1$ branches};
\filldraw[blue]
(0+3.6+3.6,0) circle (1pt);

\foreach \x in {1,1/4,2/4,3/4}
{\filldraw[blue] (-\x+3.6+3.6,-3 * \x) circle (1pt);
\filldraw[blue] (\x+3.6+3.6,-3 * \x) circle (1pt);
\filldraw[blue] (-3.1/6 * \x+3.6+3.6,-3.1 * \x) circle (1pt);}

\draw
(0+3.6+3.6,0) -- (-1+3.6+3.6,-3)
(0+3.6+3.6,0) -- (1+3.6+3.6,-3)
(0+3.6+3.6,0) -- (-3.1/6+3.6+3.6,-3.1);

\node at (.2+3.6+3.6,-2) {$\cdots$};
\node [above] at (0+3.6+3.6,0) {\tiny $\omega^{(\nu+1)}_0$};
\node at (-1.2+3.6+3.6,-3.5) {\footnotesize $W^{(\nu+1)}_1$};
\node at (-0.4+3.6+3.6,-3.65) {\footnotesize $W^{(\nu+1)}_2$};
\node at (1.1+3.6+3.6,-3.5) {\footnotesize $W^{(\nu+1)}_{\lambda}$};

\node at (1.8+3.6+3.6,-1.2) {\large $\cdots$};
\filldraw[blue]
(0+3.6+3.6+3.6,0) circle (1pt);

\foreach \x in {1,1/4,2/4,3/4}
{\filldraw[blue] (-\x+3.6+3.6+3.6,-3 * \x) circle (1pt);
\filldraw[blue] (\x+3.6+3.6+3.6,-3 * \x) circle (1pt);
\filldraw[blue] (-3.1/6 * \x+3.6+3.6+3.6,-3.1 * \x) circle (1pt);}

\draw
(0+3.6+3.6+3.6,0) -- (-1+3.6+3.6+3.6,-3)
(0+3.6+3.6+3.6,0) -- (1+3.6+3.6+3.6,-3)
(0+3.6+3.6+3.6,0) -- (-3.1/6+3.6+3.6+3.6,-3.1);

\node at (.2+3.6+3.6+3.6,-2) {$\cdots$};
\node [above] at (0+3.6+3.6+3.6,0) {\tiny $\omega^{(\mu)}_0$};
\node at (-1.2+3.6+3.6+3.6,-3.5) {\footnotesize $W^{(\mu)}_1$};
\node at (-0.4+3.6+3.6+3.6,-3.65) {\footnotesize $W^{(\mu)}_2$};
\node at (1.1+3.6+3.6+3.6,-3.5) {\footnotesize $W^{(\mu)}_{\lambda}$};
\node at (1.7+3.6+3.6,-4.2) {\tiny $\underbrace{\hspace*{110pt}}$};
\node at (1.7+3.6+3.6,-4.8) {\footnotesize $\mu-\nu$ trees each composed of $\lambda$ branches};

\end{tikzpicture}
\caption{The $n$ points in $\Omega$.}
\label{FigOmegaStructure}
\end{figure}

For convenience, we denote the $n$ points (or equivalently, the $n$ vectors in $\Omega$) by $$\{\omega^{(l)}_0,\omega^{(l)}_{i,j}\mid l \in[\mu],i \in [\lambda+1], j \in [r]\}\footnote{A precise description of the range of $l$ and $i,j$ is $(l,i) \in ([\nu] \times [\lambda+1])\cup([\nu+1,\mu]\times [\lambda])$, $j \in [r]$, where $[\nu+1,\mu] = \{\nu+1,\nu+2,\dots,\mu\}$. }$$
where the superscript $l$ denotes which tree it belongs to, the subscript $i$ denotes which branch it lies in and $j$ is the point index in that branch.
Moveover, denote each branch by
$$W^{(l)}_i = \{\omega^{(l)}_0, \omega^{(l)}_{i,1}, \omega^{(l)}_{i,2}, \dots, \omega^{(l)}_{i,r}\} \text{ for } l \in [\mu] \text{ and } i \in [\lambda+1].$$
Then by Lemma \ref{LemAlphaBeta} (i), each vector in $W^{(l)}_i$ is an $\mathbb{F}_q$-linear combination of the other $r$ vectors in $W^{(l)}_i$,
and by the construction of the matrix $W$, the vectors in different trees are linearly independent.

\begin{construction}\label{Cnst}
Define an $[n,k]$ linear code $\mathcal{C}$ over $\mathbb{F}_{q^m}$ as follows.
\begin{itemize}
\item Let $\Omega \subseteq \mathbb{F}_{q^m}$ be s set of the $n$ vectors defined above, i.e., $\Omega=\{\omega^{(l)}_0,\omega^{(l)}_{i,j}\mid l \in[\mu],i \in [\lambda+1], j \in [r]\}$. Note that each vector is of length $n_1r=m$ over $\mathbb{F}_q$ and thus can be viewed as an element in $\mathbb{F}_{q^m}$.

\item $\mathcal{C}$ encodes a file $(m_0,\dots,m_{k-1}) \in \mathbb{F}_{q^m}^k$ into $(f(\omega))_{\omega \in \Omega} \in \mathbb{F}_{q^m}^n$, where $f(x) = \sum_{i=0}^{k-1} m_i x^{q^i}$.
\end{itemize}
Denote the $n$ coordinates of $\mathcal{C}$ by the corresponding element in $\Omega$,
then $W^{(l)}_i$ is a regenerating set of each coordinate contained in $W^{(l)}_i$.
Therefore, $\mathcal{C}$ is an $[n,k]$ LRC with locality $r$.
\end{construction}

\begin{example}
We illustrate the construction through a specific example. Suppose $n=8, k=4, r=2$, then it has $n_1=3,n_2=1$ and $\lambda=1,\mu=2,\nu=1$.

The construction is over the field $\mathbb{F}_{2^6} = \mathbb{F}_2(\theta)$, where $\theta$ is a primitive element of $\mathbb{F}_{2^6}$ with minimal polynomial $x^6+x^5+1$.
By fixing a basis $\{1,\theta,\theta^2,\dots,\theta^5\}$, the subset $\Omega \subseteq \mathbb{F}_{2^6}$ is constructed as follows.

First, let
\begin{equation*}
X=(\vec{x}_0,\vec{x}_1,\vec{x}_2) = \begin{pmatrix} 1 & 0  & 1 \\ 0 & 1  & 1 \end{pmatrix}
\text{ and }
\vec{c}=(1,c_1,c_2) = (1,0,1).
\end{equation*}
Then
\begin{align*}
A =& (\vec{\alpha}_0,\vec{\alpha}_{1,1},\vec{\alpha}_{1,2},\vec{\alpha}_{2,1},\vec{\alpha}_{2,2}) \\
= &
\begin{pmatrix}
\vec{x}_0 & \vec{x}_1 & \vec{x}_2 & c_1\vec{x}_0 & c_2\vec{x}_0 \\
\vec{x}_0 & c_1\vec{x}_0 & c_2\vec{x}_0 & \vec{x}_1 & \vec{x}_2
\end{pmatrix}
=
\begin{pmatrix}
1 & 0 & 1 & 0 & 1 \\
0 & 1 & 1 & 0 & 0 \\
1 & 0 & 1 & 0 & 1 \\
0 & 0 & 0 & 1 & 1
\end{pmatrix}
\\
\text{ and }
B =& (\vec{\beta}_0,\vec{\beta}_{1,1},\vec{\beta}_{1,2}) \\
= & (\vec{x}_0, \vec{x}_1,\vec{x}_2) = \begin{pmatrix} 1 & 0 & 1 \\ 0 & 1 & 1 \end{pmatrix}.
\end{align*}
Therefore
\begin{equation*}
W = \begin{pmatrix} A & \\ & B \end{pmatrix} =
\begin{pmatrix}
1 & 0 & 1 & 0 & 1 & 0 & 0 & 0 \\
0 & 1 & 1 & 0 & 0 & 0 & 0 & 0 \\
1 & 0 & 1 & 0 & 1 & 0 & 0 & 0 \\
0 & 0 & 0 & 1 & 1 & 0 & 0 & 0 \\
0 & 0 & 0 & 0 & 0 & 1 & 0 & 1 \\
0 & 0 & 0 & 0 & 0 & 0 & 1 & 1
\end{pmatrix},
\end{equation*}
and thus
\begin{align*}
\begin{matrix}
\omega^{(1)}_0 = &1\hphantom{{}+\theta}+\theta^2\hphantom{{}+\theta^3}\\
\omega^{(1)}_{1,1} =&\hphantom{1+{}}\theta \hphantom{{}+\theta^2+\theta^3}\\
\omega^{(1)}_{1,2} =&1+\theta+\theta^2 \hphantom{{}+\theta^3}\\
\omega^{(1)}_{2,1} =&\hphantom{1 +\theta+\theta^2+{}}\theta^3\\
\omega^{(1)}_{2,2} =&1\hphantom{\theta+{}}+\theta^2+\theta^3
\end{matrix}
\text{\hspace*{16pt} and \hspace*{16pt}}
\begin{matrix}
\omega^{(2)}_0=& \theta^4 \hphantom{{}+\theta^5} \\
\omega^{(2)}_{1,1}=& \hphantom{\theta^4+{}}\theta^5 \\
\omega^{(2)}_{1,2}=& \theta^4+\theta^5
\end{matrix}.
\end{align*}
Fig. \nolinebreak \ref{FigOmegaStructureExample} gives a graphical illustration of the eight elements in $\Omega$.

\begin{figure}
\centering
\begin{tikzpicture}[scale=1]

\filldraw[blue]
(0,0) circle (1pt);

\foreach \x in {1,1/2}
{\filldraw[blue] (-\x,-3 * \x) circle (1pt);
\filldraw[blue] (\x,-3 * \x) circle (1pt);}

\node [above left] at (-0.9/2,-3.7/2) { \tiny $\omega^{(1)}_{1,1}$};
\node [left] at (-0.9,-3) { \tiny $\omega^{(1)}_{1,2}$};
\node [right] at (1/2,-3/2) {\tiny $\omega^{(1)}_{2,1}$};
\node [right] at (1,-3) {\tiny $\omega^{(1)}_{2,2}$};

\draw
(0,0) -- (-1,-3)
(0,0) -- (1,-3);

\node [above] at (0,0) {\tiny $\omega^{(1)}_0$};
\node at (-1.2,-3.5) {\footnotesize $W^{(1)}_1$};
\node at (1.1,-3.5) {\footnotesize $W^{(1)}_{2}$};


\foreach \x in {1,1/2,0}
{
\filldraw[blue] (3,-3 * \x) circle (1pt);
}

\node [left] at (3, -3 * 1) { \tiny $\omega^{(2)}_{1,2}$};
\node [left] at (3, -3 * 1/2) { \tiny $\omega^{(2)}_{1,1}$};
\node [left] at (3, -3 * 0) { \tiny $\omega^{(2)}_{0}$};

\draw (3,0) -- (3,-3);

\node at (3,-3.5) {\footnotesize $W^{(2)}_1$};

\end{tikzpicture}
\caption{The eight elements of $\Omega$ for the $[8,4]$ code.}
\label{FigOmegaStructureExample}
\end{figure}
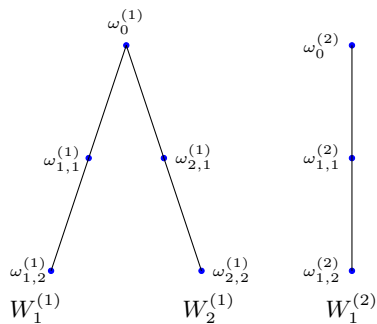

Then  the $[n=8,k=4]$ linear code $\mathcal{C}$ encodes a file $(m_0,m_1,m_2,m_3)$ into $(f(\omega))_{\omega \in \Omega}$, where $f(x) = m_0 x + m_1 x^2 + m_2 x^4 + m_3 x^8$.

A sequence of regenerating sets of the linear code $\mathcal{C}$ is
\begin{equation*}
\{\omega^{(1)}_0, \omega^{(1)}_{1,1}, \omega^{(1)}_{1,2}\}, \{\omega^{(1)}_0,\omega^{(1)}_{2,1}, \omega^{(1)}_{2,2}\}, \{\omega^{(2)}_0, \omega^{(2)}_{1,1}, \omega^{(2)}_{1,2}\},
\end{equation*}
and it is easy to see that $\Phi(1) = 3,\Phi(2)=5,\Phi(3)=8$, which coincides with  the upper bound defined by $\Psi(x)$ (see Proposition \ref{ProIntN1>N2}). Moreover, it can be verified that the minimum distance of $\mathcal{C}$ is $d = 3$, which is optimal with respect to the bound (\ref{EqBndN1>N2}) in Theorem \ref{ThmBndN1N2}.
Actually, the following theorem states that the code $\mathcal{C}$ in Construction \ref{Cnst} alsways attains the bound (\ref{EqBndN1>N2}) in Theorem \ref{ThmBndN1N2}.
\end{example}

\begin{theorem}\label{ThmOptimalCnst}
The $[n,k]$ LRC $\mathcal{C}$ obtained from Construction \ref{Cnst} has the minimum distance
\begin{equation*}
d = n-k+1-\tilde{\eta},
\end{equation*}
where
$ \tilde{\eta} = \min \{ \left\lceil \frac{(\lambda+1)(k-1)+1}{(\lambda+1)(r-1) + 1} \right\rceil, \left\lceil \frac{\lambda(k-1) + \nu +1}{\lambda(r-1) + 1} \right\rceil\} -1.$
\end{theorem}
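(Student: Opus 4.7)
The plan is to combine the upper bound already established in Theorem \ref{ThmBndN1N2} with a matching lower bound, so it suffices to prove $d \ge n-k+1-\tilde\eta$ for the code $\mathcal{C}$ produced by Construction \ref{Cnst}. By Proposition \ref{PropFiniteField}, a nonzero linearized polynomial of $q$-degree at most $k-1$ cannot vanish on any $k$ elements of $\Omega$ that are $\mathbb{F}_q$-linearly independent, so if $\mathcal{C}$ admits a nonzero codeword of weight $w$, the corresponding zero set $T\subseteq\Omega$ has $|T|=n-w$ and $\mathbb{F}_q$-rank at most $k-1$. The lower bound thus reduces to showing that every $T\subseteq\Omega$ with $\mathbb{F}_q\text{-rank}(T)\le k-1$ satisfies $|T|\le k-1+\tilde\eta$.

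Because the matrix $W$ is block diagonal with $\nu$ copies of $A$ and $\mu-\nu$ copies of $B$, vectors from different trees occupy disjoint block coordinates, so $\mathbb{F}_q$-ranks add over trees and each tree may be analyzed separately. For a single $A$-tree I plan to show that its $(\lambda+1)r+1$ vectors span an $\mathbb{F}_q$-space of dimension exactly $(\lambda+1)(r-1)+1$, whose dependency space is $(\lambda+1)$-dimensional and spanned by the branch-dependencies $D_1,\dots,D_{\lambda+1}$ supplied by Lemma \ref{LemAlphaBeta}(i). The lower bound on the rank follows by applying Lemma \ref{LemAlphaBeta}(ii) to $\vec\alpha_0$ together with any $r-1$ non-root vectors per branch, and the $D_i$'s are $\mathbb{F}_q$-linearly independent because part (i) forces each to use all $r+1$ vectors of its branch with nonzero coefficients. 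An analogous description holds for each $B$-tree.

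Given this explicit dependency picture, for each tree $l$ let $F_l$ be the collection of branches whose $r$ non-root vectors are all contained in $T$ and $b_{0,l}\in\{0,1\}$ the indicator of whether the root of tree $l$ lies in $T$. Checking which combinations $\sum_i\lambda_iD_i$ have support inside $T$ then shows that the rank defect of $T\cap\text{tree}_l$ equals $|F_l|$ when $b_{0,l}=1$ and $\max(|F_l|-1,0)$ when $b_{0,l}=0$. Setting $x=\sum_l|F_l|$ and $N=\#\{l:|F_l|\ge 1\}$ and summing over trees yields $|T|\le\mathrm{rank}(T)+x$ and $\mathrm{rank}(T)\ge(r-1)x+N$, so the hypothesis $\mathrm{rank}(T)\le k-1$ forces $(r-1)x+N\le k-1$. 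Since the $\nu$ $A$-trees carry $\lambda+1$ branches each and the $\mu-\nu$ $B$-trees carry $\lambda$ branches each, the minimum number of trees that can accommodate $x$ full branches is $\max\{\lceil x/(\lambda+1)\rceil,\lceil(x-\nu)/\lambda\rceil\}$, matching $\Psi(x)-xr$ from Proposition \ref{ProIntN1>N2}. Consequently $\Psi(x)-x<k$, so $x\le\tilde\eta$ by the definition of $\tilde\eta$, and hence $|T|\le(k-1)+\tilde\eta$ as required.

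The main obstacle is the middle step: Lemma \ref{LemAlphaBeta}(ii) only supplies a sufficient condition for linear independence, so one must upgrade it to an exact determination of the dependency space of $T\cap\text{tree}$ by verifying that the branch-dependencies are themselves independent and that every dependency of a subset arises from a combination of them supported in $T$. Once this structural rank formula is in place, the remaining optimization faithfully reproduces the integer program whose solution is $\tilde\eta$, and the bound follows by combining Theorem \ref{ThmBndN1N2} with the lower bound derived above.
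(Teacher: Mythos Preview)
Your argument is correct and complete in outline; the rank-defect formula you state for each tree is exactly right once the dependency space is identified with $\mathrm{span}\{D_1,\dots,D_{\lambda+1}\}$, and the optimisation $(r-1)x+N\le k-1$ together with $N\ge\max\{\lceil x/(\lambda+1)\rceil,\lceil(x-\nu)/\lambda\rceil\}$ does force $x\le\tilde\eta$ and hence $|T|\le k-1+\tilde\eta$.  However, your route differs from the paper's.

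The paper never determines the full dependency space of a tree.  Instead, for any $(k+\tilde\eta)$-subset $V$ it works tree by tree and \emph{deletes} vectors from $U_l=V\cap(\text{tree }l)$ until the remaining set $V_l$ satisfies the hypothesis of Lemma~\ref{LemAlphaBeta}(ii); the deletion count is bounded by $\lfloor(|U_l|-1)/r\rfloor$, and a direct arithmetic estimate (using the number $\epsilon$ of nonempty $U_l$'s and the sizes $(\lambda+1)r+1$, $\lambda r+1$ of the two tree types) shows $\sum_l|V_l|\ge k$.  So the paper stays strictly within the sufficient condition of Lemma~\ref{LemAlphaBeta}(ii) and finishes with a counting inequality tailored to the explicit form of $\tilde\eta$.

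By contrast, you upgrade Lemma~\ref{LemAlphaBeta} to an exact description of all dependencies and then read off the rank defect of an arbitrary subset; this lets you phrase the conclusion literally as $\Psi(x)-x<k$ and invoke the definition of $\tilde\eta$ directly.  The payoff is a tighter conceptual link between the construction and the integer program of Section~\ref{SecBnds}: optimality of $\mathcal{C}$ becomes visibly equivalent to the branch-packing problem whose optimum is $\Psi(x)$.  The cost is the extra structural lemma you flag as the ``main obstacle'' (independence of the $D_i$ and the support computation for $\sum_i\lambda_iD_i$), which is easy but is not stated in the paper.  Either approach is fine; yours is a bit more illuminating, the paper's a bit more economical with the lemmas already on hand.
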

\begin{proof}
First, we claim that for any $V \subseteq \Omega$ with $|V| = k+\tilde{\eta}$, there exist subsets $V_1,\dots,V_\mu \subseteq V$ such that the following two conditions are satisfied:
\begin{itemize}
\item[(1)] $|\cup_{l=1}^\mu V_l| \ge k$;
\item[(2)] For $1 \le l \le \mu$, $V_l \subseteq \cup_{i=1}^{\lambda+1} W^{(l)}_i$, and there exists $\omega_l \in V_l$ such that $|(V_l - \{\omega_l\}) \cap W^{(l)}_i| \le r-1$ for all $i \in [\lambda+1]$.
\end{itemize}
The proof of the claim is given in Lemma \ref{LemClm} of Appendix \ref{AppPrfClm}.

From the claim and Lemma \ref{LemAlphaBeta} (ii), we can deduce that, for $1 \le l \le \mu$, the elements in $V_l$ are linearly independent over $\mathbb{F}_q$, and thus the elements in $V_1 \cup V_2 \cup \dots \cup V_\mu$ are linearly independent over $\mathbb{F}_q$.
Then by Proposition \ref{PropFiniteField}, $\mathcal{C}$ can tolerate any $n - (k+ \tilde{\eta})$ erasures.
Consequently, the minimum distance of $\mathcal{C}$ satisfies $d \ge n-k+1-\tilde{\eta}$, and the equality actually holds because of Theorem \ref{ThmBndN1N2}.
\end{proof}

\subsection{Influence of the Regenerating Set Structure}
As we have stated in Example \ref{ExConstruct} and earlier sections, the structure of regenerating sets can influence the value of the function $\Phi(x)$ which in turn influence the value of the minimum distance. In this section, we will check the regenerating set structure of the code $\mathcal{C}$ in Construction \ref{Cnst} to support its attaining the optimal minimum distance, and also make  a comparison with some previously constructed codes.

In Fig. \nolinebreak \ref{FigOmegaStructure} it gives a graphical description of the regenerating sets for $\mathcal{C}$, while each line (or a branch, i.e. $W^{(l)}_i$ ) represents a regenerating set. Consider the collection of regenerating sets $\{W^{(l)}_i \}_{l \in [\mu],i \in [\lambda+1]}$. It has a nontrivial union with respect to any order they are arranged in.

In fact, it is easy to see that for the code $\mathcal{C}$,
$$ \Phi(x) = \Min_{\substack{\mathcal{I} \subseteq \{W^{(l)}_i \}_{l \in [\mu],i \in [\lambda+1]} \\ |\mathcal{I}| = x}}|\cup_{V \in \mathcal{I}} V|.$$
We can count from Fig. \nolinebreak \ref{FigOmegaStructure} that
\begin{align*}
\Min_{\substack{\mathcal{I} \subseteq \{W^{(l)}_i \}_{l,i} \\ |\mathcal{I}| = x}}|\cup_{V \in \mathcal{I}} V| &= \begin{cases} xr+ \left\lceil \frac{x}{\lambda+1} \right\rceil, \text{ if } x \le \nu(\lambda+1)  \\ xr + \nu + \left\lceil \frac{x-\nu(\lambda+1)}{\lambda} \right\rceil, \text{ if } x > \nu(\lambda+1)\end{cases} \\
&= xr+\max\{ \left\lceil \frac{x}{\lambda+1} \right\rceil, \left\lceil \frac{x-\nu}{\lambda} \right\rceil\}.
\end{align*}
Therefore, the $\Phi(x)$ of $\mathcal{C}$ satisfies
\begin{equation*}
\Phi(x) = xr+\max\{ \left\lceil \frac{x}{\lambda+1} \right\rceil, \left\lceil \frac{x-\nu}{\lambda} \right\rceil\},
\end{equation*}
which attains the upper bound defined by $\Psi(x)$ (see Theorem \ref{ThmIntBnd} and Proposition \ref{ProIntN1>N2}). That is, $\mathcal{C}$ achieves the maximum value of $\Phi(x)$ among all the LRCs with $n_1>n_2$, which can be regarded as a support of the code $\mathcal{C}$ attaining the optimal minimum distance.

On the other hand, we will see some previously constructed codes have smaller minimum distance due to their regenerating set structure. The code presented by Silberstein et al. in \cite{silberstein2013optimal} and that proposed by Tamo et al. in \cite{FamilyTamo14} are both of pairwise disjoint regenerating sets. Namely, partition the set $[n]$ into $n_1$ subsets $I_1,I_2,\dots,I_{n_1}$ such that $|I_j| =r+1$ for $1 \le j \le n_1 -1$ and $|I_{n_1}| = r+1 - n_2$, then $I_1,I_2,\dots,I_{n_1}$ form a sequence of regenerating sets that has a nontrivial union.

Clearly, the $\Phi(x)$ satisfies
\begin{equation*}
\Phi(x) \le (r+1)x - n_2 , \;\;\forall 1 \le x \le n_1.
\end{equation*}
Then by Theorem \ref{ThmRegDis}, $\rho = \max\{x : \Phi(x) -x < k \} \ge \left\lceil \frac{k+n_2}{r} \right\rceil -1$, and the minimum distance satisfies
\begin{equation*}
d \le n- k+1 - (\left\lceil \frac{k+n_2}{r} \right\rceil -1).
\end{equation*}
Thus it cannot attain the bound (\ref{EqGoplanBnd}) when $\left\lceil \frac{k+n_2}{r} \right\rceil > \left\lceil \frac{k}{r} \right\rceil$, i.e., $ k \mod r \ge n \mod (r+1) >0$. In fact, the minimum distance  sometimes goes beneath the bound (\ref{EqBndN1>N2}) of Theorem \ref{ThmBndN1N2}, that is, the optimal minimum distance cannot be attained under this kind of regenerating set structure.
Fig. \nolinebreak \ref{ConsComp1} gives a comparison between the minimum distance of $\mathcal{C}$ and that of the codes in \cite{silberstein2013optimal,FamilyTamo14} for $n=25$ and  $r=3$.
\begin{figure}
  \centering
  \includegraphics[width=0.52\textwidth]{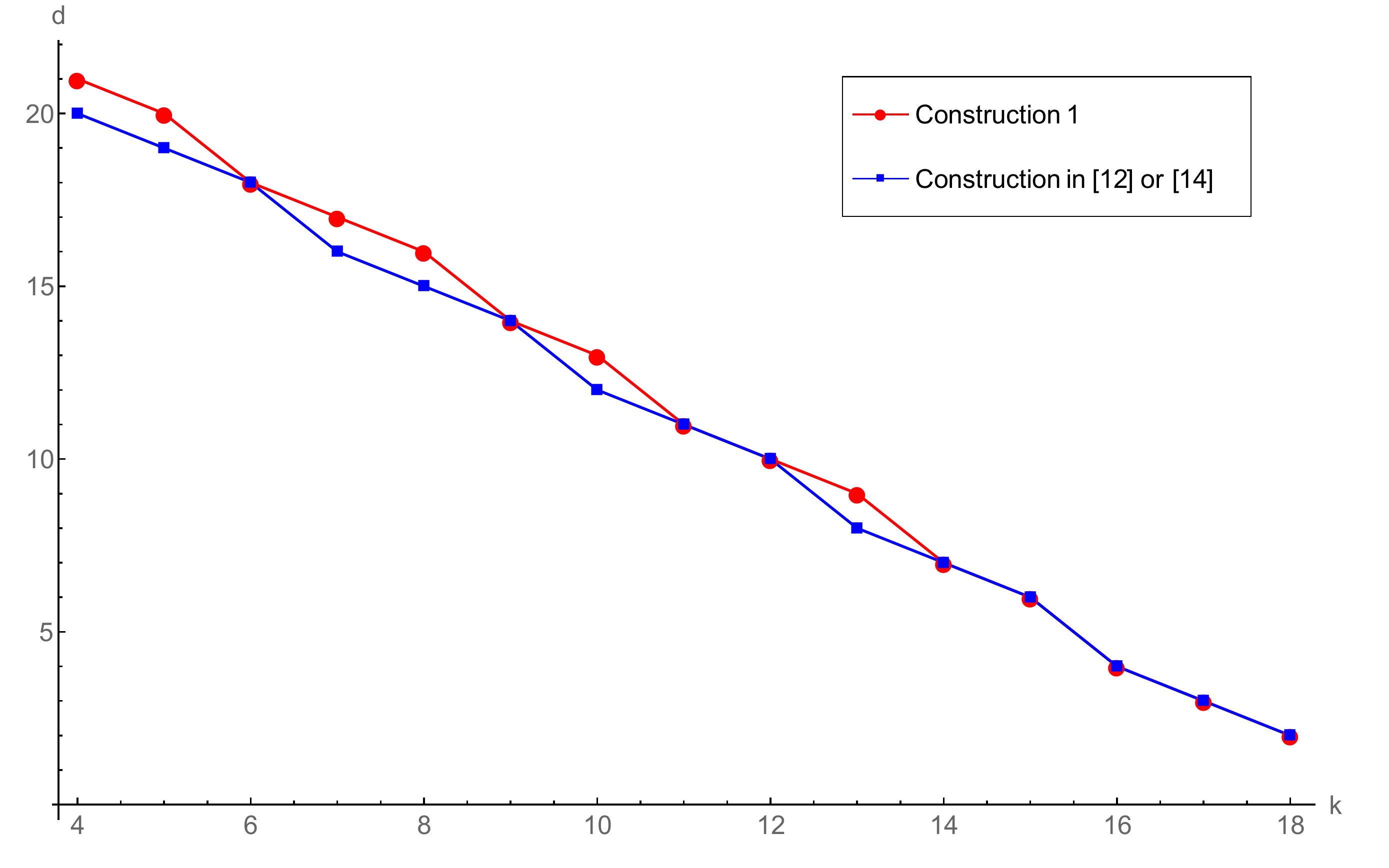}\\
  \caption{A comparison of the two LRCs for $n=25 ,r=3$}\label{ConsComp1}
\end{figure}

\section{Conclusions}\label{SecConclusion}
In this paper we carry out an in-depth study of the two problems:
what is the largest possible minimum distance for an $[n,k]$ LRC?
How to construct an $[n,k]$ LRC with the largest possible minimum distance?
For the first problem, we derive an integer programming based upper bound on the minimum distance for LRCs, and then give an explicit bound by solving the integer programming problem.
The explicit bound applies all LRCs satisfying $n_1 >n_2$ .
For the second problem, we present a construction of linear LRCs that attains the explicit bound for $n_1>n_2$.
Therefore, we have completely solved the two problems under the condition $n_1> n_2$.
However, for $n_1 \le n_2$ the two problems remain unsolved in many cases.

\appendices

\section{Proof of Proposition \ref{ProIntN1>N2}}\label{AppIntN1>N2}
\begin{lemma}
For $1 \le x \le n_1$,
\begin{equation*}
\Psi(x) \ge xr+\max\{ \left\lceil \frac{x}{\lambda+1} \right\rceil, \left\lceil \frac{x-\nu}{\lambda} \right\rceil\}.
\end{equation*}
\end{lemma}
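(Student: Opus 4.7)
The plan is to exhibit an explicit feasible choice of the outer variables $(s, t_1, \ldots, t_s, a_1, \ldots, a_s)$ for which the inner minimum over $(l, h_1, \ldots, h_l)$ equals exactly $xr + \max\{\lceil x/(\lambda+1)\rceil, \lceil (x-\nu)/\lambda\rceil\}$. Since $\Psi(x)$ is defined as the maximum of this inner minimum over all feasible outer choices, producing one such witness suffices for the lower bound.

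First I would take $s = \mu$ and $a_i = t_i - 1$ for every $i$, so that each summand $a_{h_i} - t_{h_i}$ equals $-1$. Under this choice the inner expression simplifies to
\begin{equation*}
xr + 1 - \sum_{i=1}^{l-1}(a_{h_i}-t_{h_i}) = xr + l,
\end{equation*}
so the inner minimization reduces to finding the smallest $l$ such that some $l$ of the $t_i$'s sum to at least $x$. Next I would choose the $t_i$'s to be as balanced as possible: $\nu$ of them equal to $\lambda+1$ and $\mu-\nu$ of them equal to $\lambda$. Then $\sum t_i = \nu(\lambda+1)+(\mu-\nu)\lambda = \lambda\mu+\nu = n_1$ and $\sum a_i = n_1 - \mu = n_2$, and the remaining constraints in (\ref{EqIntCondOut}) (namely $a_i \ge t_i -1$, $s \ge 1$, $t_i \ge 1$) hold automatically because $\lambda \ge 1$ under the hypothesis $n_1 > n_2$.

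Now I need to evaluate the minimum $l$ such that some $l$ of the chosen $t_i$'s sum to at least $x$. Greedily picking the largest parts first, I would split into two cases: if $x \le \nu(\lambda+1)$, all required parts can be drawn from the $(\lambda+1)$-parts and $l_{\min} = \lceil x/(\lambda+1)\rceil$; if $x > \nu(\lambda+1)$, we use all $\nu$ of the $(\lambda+1)$-parts and then $\lceil (x-\nu(\lambda+1))/\lambda\rceil$ of the $\lambda$-parts, giving $l_{\min} = \nu + \lceil (x-\nu(\lambda+1))/\lambda\rceil = \lceil (x-\nu)/\lambda\rceil$. A short arithmetic check (using that $x/(\lambda+1) \ge (x-\nu)/\lambda$ iff $x \le \nu(\lambda+1)$) shows that in either case $l_{\min}$ coincides with $\max\{\lceil x/(\lambda+1)\rceil, \lceil (x-\nu)/\lambda\rceil\}$. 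Substituting back yields the claimed lower bound on $\Psi(x)$.

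The only mildly delicate point is confirming that the balanced partition actually realizes the claimed $l_{\min}$ — this requires observing that when the $t_i$'s are picked in decreasing order the greedy procedure is optimal, which is immediate since all parts are positive. The rest is a routine case split, and no other obstacles are anticipated.
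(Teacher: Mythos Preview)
Your proposal is correct and is essentially the same argument as the paper's: you choose the identical witness $s=\mu$, $t_i\in\{\lambda,\lambda+1\}$ with $a_i=t_i-1$, reduce the inner minimum to $xr+l$, and bound $l$ from below by $\max\{\lceil x/(\lambda+1)\rceil,\lceil(x-\nu)/\lambda\rceil\}$. The only cosmetic difference is that you compute $l_{\min}$ exactly via a greedy case split, whereas the paper simply bounds $t_{h_1}+\cdots+t_{h_l}\le\min\{(\lambda+1)l,\lambda l+\nu\}$ to obtain the same lower bound on $l$.
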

\begin{proof}
Set
\begin{equation*}
\begin{cases}
s = \mu, \\
t_1 = \dots = t_\nu = \lambda+1, t_{\nu+1} = \dots = t_\mu =\lambda, \\
a_1 = \dots = a _\nu = \lambda, a_{\nu+1} = \dots = a_\mu = \lambda-1.
\end{cases}
\end{equation*}
It is clear that  $s$ and $\{t_i,a_i\}_{i \in [s]}$ satisfy (\ref{EqIntCondOut}), and then we have
\begin{align*}
\Psi(x) & \ge \Min_{l,h_1,\dots,h_l} (xr+1 - \sum_{i=1}^{l-1}(a_{h_i} - t_{h_i})) \\
& = \Min_{l,h_1,\dots,h_l} (xr+l),
\end{align*}
where the minimum is subject to (\ref{EqIntCondIn}).
On the other hand, for any integers $l, h_1,\dots,h_l$ satisfying (\ref{EqIntCondIn}),
\begin{equation*}
x \le t_{h_1} + \dots + t_{h_l} \le \begin{cases} (\lambda +1) l, \text{ if } l \le \nu, \\ (\lambda+1)\nu + (l -\nu) \lambda, \text{ if } l > \nu,
\end{cases}
\end{equation*}
which induces $x \le \min \{ \lambda l + l , \lambda l + \nu \}$.
Therefore $l \ge \max\{ \left\lceil \frac{x}{\lambda+1} \right\rceil, \left\lceil \frac{x-\nu}{\lambda} \right\rceil\}$ and then
\begin{equation*}
\Psi(x) \ge \Min_{l,h_1,\dots,h_l} (xr+l) \ge xr + \max\{ \left\lceil \frac{x}{\lambda+1} \right\rceil, \left\lceil \frac{x-\nu}{\lambda} \right\rceil\}.
\end{equation*}
\end{proof}

\begin{lemma}
For $1 \le x \le n_1$,
\begin{equation*}
\Psi(x) \le xr+\max\{ \left\lceil \frac{x}{\lambda+1} \right\rceil, \left\lceil \frac{x-\nu}{\lambda} \right\rceil\}.
\end{equation*}
\end{lemma}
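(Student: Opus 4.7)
The plan is to reduce the integer programming problem to the case $s = \mu$, where the constraints rigidify and the inner min becomes elementary to analyze, and then finish by a direct majorization argument.

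Introduce the slack variables $e_i = a_i - t_i + 1 \ge 0$; the feasibility conditions $\sum t_i = n_1$ and $\sum a_i = n_2$ give $\sum_i e_i = s - \mu$, so when $s > \mu$ at least one $e_i \ge 1$. I would then prove a \emph{merging lemma}: pick $i \ne j$ with $e_i + e_j \ge 1$ and replace the pairs $(t_i, a_i), (t_j, a_j)$ by a single pair $(t_i + t_j, a_i + a_j)$; the resulting tuple $(s-1, \tilde{t}, \tilde{a})$ is still feasible (the only nontrivial check, $\tilde{a} \ge \tilde{t} - 1$, is exactly $e_i + e_j \ge 1$), and I would show $\text{Min}(s, t, a) \le \text{Min}(s-1, \tilde{t}, \tilde{a})$ by translating any cover in the merged IP back to the original one via splitting the merged entry. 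The routine cases (merged entry absent or serving as a full group) leave the value unchanged; the interesting case is when the merged entry is the partial group $h_{l'}$ and the partial amount exceeds both $t_i$ and $t_j$. Here one promotes the index with $e \ge 1$ (which exists by the merging condition) to a full group and keeps the other as partial, causing the value to change by $-(a - t) = -(e - 1) \le 0$.

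Iterating the merging lemma down to $s = \mu$, the condition $\sum e_i = 0$ forces every $a_i = t_i - 1$, so every summand $a_{h_i} - t_{h_i}$ equals $-1$ and the inner minimum reduces to $xr + l$ minimized over valid covers of $x$, i.e., to $xr + l_{\min}(t)$ where $l_{\min}(t)$ is the minimum number of $t_i$'s whose sum is $\ge x$. It remains to verify $l_{\min}(t) \le M := \max\{\lceil x/(\lambda+1) \rceil, \lceil (x-\nu)/\lambda \rceil\}$ for every $t$ with $t_i \ge 1$ integer and $\sum_{i=1}^{\mu} t_i = n_1 = \lambda \mu + \nu$. The key observation is that any such $t$ majorizes the most even distribution consisting of $\nu$ copies of $\lambda+1$ and $\mu - \nu$ copies of $\lambda$, so its top-$l$ sum is at least that of the even distribution; and a direct calculation shows the even distribution's top-$l$ sum first reaches $x$ exactly when $l = M$. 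Combined with the merging reduction, this yields $\Psi(x) \le xr + M$.

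The main obstacle will be the monotonicity step in the merging lemma. One must track carefully how splitting the merged entry interacts with the chain $\sum_{j=1}^{l-1} t_{h_j} < x \le \sum_{j=1}^{l} t_{h_j}$, since promoting one index to full and keeping the other as partial alters both $l$ and the partial amount. Invoking $e_i + e_j \ge 1$ to decide which of $i,j$ becomes full is precisely what keeps the value nondecreasing; the remainder of the argument is routine combinatorics.
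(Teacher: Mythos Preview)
Your approach is correct and genuinely different from the paper's. The paper argues by contradiction: assuming the inner minimum exceeds $xr+M$ for some feasible $(s,t,a)$, it introduces the weighted quantities $b_i=(\lambda+1)a_i-\lambda t_i$ (when $x\le(\lambda+1)\nu$) or $c_i=\lambda a_i-(\lambda-1)t_i$ (when $x>(\lambda+1)\nu$), sorts them, isolates a prefix via a minimality argument, and computes $\sum b_i$ (resp.\ $\sum c_i$) in two ways to reach a contradiction. Your route instead reduces the outer maximisation to the boundary $s=\mu$ by the merging lemma, where the slack constraint $\sum e_i=0$ rigidifies everything to $a_i=t_i-1$ and the inner minimum collapses to $xr+l_{\min}(t)$; a standard majorization fact (the balanced partition of $n_1$ into $\mu$ parts is minimal in the majorization order) then finishes. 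Your case analysis in the merging step is sound: the only delicate subcase is when the merged index is the partial group and the residual $x-S$ exceeds both $t_i$ and $t_j$, and there the condition $e_i+e_j\ge1$ guarantees one of the two can be promoted to a full group with value change $-(e-1)\le0$. What your argument buys is a structural explanation of why the extremal configuration is exactly the one used in the matching lower bound and in Construction~\ref{Cnst}; the paper's weighted-sum argument is shorter but less transparent about where the extremum sits. One minor remark: you only need $l_{\min}(t')\le M$, not equality, and that follows immediately since $l_{\min}(t')\in\{\lceil x/(\lambda+1)\rceil,\lceil(x-\nu)/\lambda\rceil\}$.
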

\begin{proof}
We prove the lemma by contradiction.
Assume that for some $1 \le x \le n_1$,
$$\Psi(x) \ge xr+1 +\max\{ \left\lceil \frac{x}{\lambda+1} \right\rceil, \left\lceil \frac{x-\nu}{\lambda} \right\rceil\}.$$
Then there exist integers $s$ and $t_i,a_i$,  $1 \le i \le s$, satisfying the constraints (\ref{EqIntCondOut}) and
$$ \Min_{l,h_1,\dots,h_l} (xr+1 - \sum_{i=1}^{l-1}(a_{h_i} - t_{h_i})) \geq xr+1 +\max\{ \left\lceil \frac{x}{\lambda+1} \right\rceil, \left\lceil \frac{x-\nu}{\lambda} \right\rceil\},$$
where the minimum is subject to the constraint (\ref{EqIntCondIn}).
Therefore for all integers $l$ and $h_1,\dots,h_l \in [s]$ satisfying the constraint (\ref{EqIntCondIn}), it has
\begin{equation}\label{EqIntCon}
\sum_{i=1}^{l-1}(a_{h_i} - t_{h_i}) \le - \max\{ \left\lceil \frac{x}{\lambda+1} \right\rceil,\left\lceil \frac{x-\nu}{\lambda} \right\rceil\}\;.
\end{equation}

Consider the following two cases.

{\bf Case 1.} $ 1\le x \le (\lambda+1) \nu$.
Then $\max\{ \left\lceil \frac{x}{\lambda+1} \right\rceil,\left\lceil \frac{x-\nu}{\lambda} \right\rceil\} = \left\lceil \frac{x}{\lambda+1} \right\rceil.$
For $1 \le i \le s$, define
$$b_i = (\lambda +1)a_i - \lambda t_i.$$
Then without loss of generality, we can assume that $b_1 \ge b_2 \ge \dots \ge b_s$.
Let $h$ be the smallest integer such that $1 \le h \le s$ and $ \sum_{i=1}^h (a_i - t_i) \le -\left\lceil \frac{x}{\lambda+1} \right\rceil.$
Note that $h$ exists because $\sum_{i=1}^s (a_i - t_i) = n_2 - n_1 =  - \mu  \le -\left\lceil \frac{x}{\lambda+1} \right\rceil$.
Next we consider the value of $t_1 + \dots + t_h$.

If $t_1 + \dots + t_h \ge x$, there exists a positive integer $h^\prime \le h$ such that $\sum_{j=1}^{h^\prime -1}t_j < x \le \sum_{j=1}^{h^\prime }t_j$.
Then $h'-1<h$ and by (\ref{EqIntCon}),
$$\sum_{i=1}^{h^\prime-1}(a_i - t_i) \le -  \left\lceil \frac{x}{\lambda+1} \right\rceil,$$
which contradicts to the minimality of $h$.

If $t_1 + \dots + t_h < x$, we compute $\sum_{i=1}^s b_i$ in two different ways.
On the one hand,
\begin{align}\label{EqSumBi}
\sum_{i=1}^s b_i & = \sum_{i=1}^s ((\lambda +1)a_i - \lambda t_i) \nonumber \\
& = (\lambda +1) n_2 - \lambda n_1 \nonumber \\
& = \nu - \mu.
\end{align}
On the other hand, we claim that
\begin{enumerate}
\item[(i)] $\sum_{i=1}^h b_i \le -1$ and $b_i \le -1$ for $h +1 \le i \le s$,
\item[(ii)] $h -s \le \nu - \mu$,
\end{enumerate}
and then
\begin{align*}
\sum_{i=1}^s b_i & = (\sum_{i=1}^h b_i) + (\sum_{i=h+1}^s b_i) \\
& \le -1 + (-1) \times (s-h) \\
& = -1+h-s \\
& \le \nu - \mu -1,
\end{align*}
which contradicts to (\ref{EqSumBi}).

In fact, the claim (i) holds because
\begin{align*}
\sum_{i=1}^h b_i & = \sum_{i=1}^h ((\lambda +1)a_i - \lambda t_i) \\
& = (\lambda +1)\sum_{i=1}^h (a_i - t_i ) + \sum_{i=1}^h t_i \\
& \substack{(a) \\ \le} -(\lambda +1)\left\lceil \frac{x}{\lambda+1} \right\rceil + x-1 \\
& \le -1,
\end{align*}
where (a) follows from $\sum_{i=1}^h (a_i - t_i ) \le -\left\lceil \frac{x}{\lambda+1} \right\rceil$ and $\sum_{i=1}^h t_i < x$.
Then $b_j \le \frac{1}{h} \sum_{i=1}^h b_i <0$ for $h+1 \le j \le s$.
To show the claim (ii), observe that $a_i \ge t_i -1$ and $\sum_{i=1}^{h-1} (a_i - t_i) \ge -\left\lceil \frac{x}{\lambda+1} \right\rceil +1$ from the minimality of $h$.
Then we have
\begin{align*}
- \mu & = n_2 - n_1 = \sum_{i=1}^s (a_i - t_i) \\
& = \sum_{i=1}^{h-1} (a_i - t_i) + \sum_{i=h}^s (a_i - t_i) \\
& \ge -\left\lceil \frac{x}{\lambda+1} \right\rceil + 1 + (-1) \times (s-h+1).
\end{align*}
Because $x \le (\lambda +1 )\nu$, it holds
$$ - \mu \ge - \nu +1 + (-1) \times (s-h+1) = - \nu + h -s,$$
and the claim (ii) follows directly.

{\bf Case 2.} $(\lambda+1)\nu+1 \leq x \le n_1$.
Then $\max\{ \left\lceil \frac{x}{\lambda+1} \right\rceil,\left\lceil \frac{x-\nu}{\lambda} \right\rceil\} = \left\lceil \frac{x-\nu}{\lambda} \right\rceil.$
Similar to Case 1, define
$$c_i = \lambda a_i - (\lambda -1)t_i ,\;\;\;\forall\; 1\le i \le s$$
and assume $c_1 \ge c_2 \ge \dots \ge c_s$.
Let $g \in [s]$ be the smallest positive integer such that $ \sum_{i=1}^{g}(a_i - t_i) \le -\left\lceil \frac{x-\nu}{\lambda} \right\rceil.$
Note that $g$ exists because $\sum_{i=1}^s (a_i - t_i) = n_2 - n_1 =  - \mu  \le -\left\lceil \frac{x-\nu}{\lambda} \right\rceil$.
Next we consider the value of $t_1 + \dots + t_g$.

Similar to Case 1, $t_1 + \dots + t_g \ge x$ contradicts to the minimality of $g$.
Then it follows $t_1 + \dots + t_g < x$.
We compute the value of $\sum_{i=1}^{s} c_i$ in two different ways.
On the one hand,
\begin{align}\label{EqSumCi}
\sum_{i=1}^s c_i &= \lambda \sum_{i=1}^s a_i - (\lambda-1) \sum_{i=1}^s t_i \nonumber \\
&= \lambda n_2 - (\lambda-1) n_1 \nonumber \\
& = \nu.
\end{align}
On the other hand, we claim that
\begin{enumerate}
\item[(i)] $\sum_{i=1}^{g} c_i \le \nu-1$,
\item[(ii)] $c_i \le 0$ for  $g+1 \le i \le s$.
\end{enumerate}
Then
\begin{align*}
\sum_{i=1}^s c_i & = \sum_{i=1}^g c_i + \sum_{i=g+1}^s c_i \\
& \le \sum_{i=1}^g c_i \le \nu -1,
\end{align*}
which contradicts to (\ref{EqSumCi}).

Note that $\sum_{i=1}^g (a_i - t _i) \le -\left\lceil \frac{x-\nu}{\lambda} \right\rceil$ and $\sum_{i=1}^{g} t_i < x$, then the claim (i) follows from
\begin{align*}
\sum_{i=1}^{g} c_i& = \lambda \sum_{i=1}^{g} (a_i - t_i) + \sum_{i=1}^{g} t_i \\
& \le - \lambda \left\lceil \frac{x-\nu}{\lambda} \right\rceil + x-1 \\
& \le \nu-1.
\end{align*}
To show the claim (ii), observe that $c_j \le \frac{1}{g} \sum_{i=1}^g c_i \le \frac{\nu-1}{g}$ for $g+1 \le j \le s$, and $g \ge - \sum_{i=1}^g (a_i - t _i)  \ge \left\lceil \frac{x-\nu}{\lambda} \right\rceil > \nu$ where the first inequality is from $a_i\geq t_i-1$ for $1\leq i\leq s$ and the last inequality is from $x\geq (\lambda+1)\nu+1$.
Then it has
$ c_j < \frac{\nu}{g} <1$ for $g+1 \le j \le s$ and the claim (ii) then follows.
\end{proof}

\section{Proof of Lemma \ref{LemAlphaBeta}}\label{AppPrfLemAlpBet}
(i) Because $X=(\vec{x}_0,\vec{x}_1,\dots,\vec{x}_r)$ generates an $[r+1,r]$ MDS code, there exist nonzero elements $e_0,e_1,\dots,e_r \in \mathbb{F}_q$ such that $e_0 \vec{x}_0 + e_1 \vec{x}_1 + \dots + e_r \vec{x}_r =0$.
Moreover, since $\vec{c}=(1,c_1,\dots,c_r)$ is a codeword of the MDS code, it has $e_0 + e_1c_1 + \dots + e_rc_r = 0$.
Therefore $e_0 \vec{\alpha}_0 + e_1 \vec{\alpha}_{i,1} + \dots + e_r \vec{\alpha}_{i,r} =0$ for $1 \le i \le \lambda+1$.
Thus (i) follows directly.

(ii) We prove the statement by contradiction.
Assume that the vectors in $F$ are linearly dependent, i.e. there exists $e_{\vec{\alpha}}\in\mathbb{F}_q$ for each $\vec{\alpha} \in F$ such that $\sum_{\alpha \in F} e_{\vec{\alpha}} \vec{\alpha} = 0$, where $\{e_{\vec{\alpha}}\}_{\vec{\alpha}\in F}$ are not all zeros.
In fact, at least two out of $\{e_{\vec{\alpha}}\}_{\vec{\alpha}\in F}$ are nonzero because the vectors in $F$ are not zero vectors.
We consider the following two cases.

{\bf Case 1.} $|(F - \{\vec{\alpha}_0\}) \cap \mathcal{A}_i| \le r-1$ for $1 \le i \le \lambda+1$.
Because at least two out of $\{e_{\vec{\alpha}}\}_{\vec{\alpha}\in F}$ are nonzero, there exists $i_0 \in [\lambda+1]$ such that the coefficients $\{e_{\vec{\alpha}}\}_{\vec{\alpha} \in \mathcal{A}_{i_0} \setminus \{\vec{\alpha}_0\}}$ are not all zero.
Then without loss of generality, assume $(F - \{\vec{\alpha}_0\}) \cap \mathcal{A}_{i_0} = \{\vec{\alpha} _{i_0,1},\vec{\alpha} _{i_0,1}\dots,\vec{\alpha} _{i_0,h}\}$, where $h \le r-1$.
Consider the restriction of the linear combination $\sum_{\vec{\alpha}  \in F} e_{\vec{\alpha}} \vec{\alpha} $ to its  $i_0$th thick row, (i.e., the $((i_0-1)r+1)$-th row to the $i_0 r$-th row,) we have $\sum_{j=1}^{h} e_{\vec{\alpha}_{i_0,j}} \vec{x}_j = e \vec{x}_0$ for some $e \in \mathbb{F}_q$.
It follows that $\vec{x}_0,\vec{x}_1,\dots,\vec{x}_h$ are $\mathbb{F}_q$-linearly dependent, where $h\leq r-1$, which contradicts the fact that $(\vec{x}_0,\dots,\vec{x}_r)$ generates an $[r+1,r]$ MDS code.

{\bf Case 2.} For some $(i_0,j_0) \in [\lambda+1] \times [r]$, $|(F - \{\vec{\alpha}_{i_0,j_0}\}) \cap \mathcal{A}_i| \le r-1$ for $1 \le i \le \lambda+1$.
Without loss of generality, assume $i_0 = j_0= 1$, i.e., $|(F - \{\vec{\alpha}_{1,1}\}) \cap \mathcal{A}_i| \le r-1$ for $1 \le i \le \lambda+1$.
If there exists $l$, $2 \le l \le \lambda+1$, such that $\{e_{\vec{\alpha}}\}_{\vec{\alpha} \in \mathcal{A}_{l}\setminus \{\vec{\alpha}_0\}}$ are not all zero, then similar to Case 1, restricting the linear combination $\sum_{\vec{\alpha} \in F} e_{\vec{\alpha}} \vec{\alpha}$ to its $l$th thick row will lead a contradiction.
Therefore we have $e_{\vec{\alpha}}=0$ for all $\vec{\alpha} \in \cup_{i=2}^{\lambda+1} \mathcal{A}_{i} - \{\vec{\alpha}_0\}$.
Thus it suffice to check the vectors in $F\cap \mathcal{A}_1$.
Similarly, a contradiction arises when restricting $\sum_{\vec{\alpha} \in F} e_{\vec{\alpha}} \vec{\alpha}$ to the first thick row.

\section{Proof of The Claim}\label{AppPrfClm}
\begin{lemma}\label{LemClm}
For any $V \subseteq \Omega$ with $|V| = k+\tilde{\eta}$, there exist subsets $V_1,\dots,V_\mu \subseteq V$ such that the following two conditions are satisfied:
\begin{itemize}
\item[(1)] $|\cup_{l=1}^\mu V_l| \ge k$;
\item[(2)] For $1 \le l \le \mu$, $V_l \subseteq \cup_{i=1}^{\lambda+1} W^{(l)}_i$, and there exists $\omega_l \in V_l$ such that $|(V_l - \{\omega_l\}) \cap W^{(l)}_i| \le r-1$ for all $i \in [\lambda+1]$.
\end{itemize}
\end{lemma}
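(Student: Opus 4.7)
The plan is to construct $V_1, \dots, V_\mu$ explicitly from the intersections $T_l := V \cap \bigl(\bigcup_{i=1}^{b_l} W_i^{(l)}\bigr)$, where $b_l = \lambda+1$ for $l \in [\nu]$ and $b_l = \lambda$ for $l \in [\nu+1, \mu]$, and then to bound the total loss $\sum_l (|T_l| - |V_l|)$ by $\tilde{\eta}$. Because the trees are pairwise disjoint and $|V| = k + \tilde{\eta}$, such a bound immediately yields $|\bigcup_l V_l| = \sum_l |V_l| \ge k$, which is condition~(1).

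First I would decompose each $T_l$ by the root indicator $\varepsilon_l := [\omega_0^{(l)} \in V] \in \{0,1\}$ and by the per-branch non-root counts $g_i^{(l)} := |T_l \cap (W_i^{(l)} \setminus \{\omega_0^{(l)}\})| \in [0,r]$, so that $|T_l| = \varepsilon_l + \sum_i g_i^{(l)}$. Then I would build $V_l$ by cases: if $\varepsilon_l = 1$, take $\omega_l := \omega_0^{(l)}$ and put into $V_l$ the root together with $\min(g_i^{(l)}, r-1)$ non-root elements from each branch $i$; if $\varepsilon_l = 0$ and some branch is non-empty, let $i^*$ attain $\max_i g_i^{(l)}$, pick any $\omega_l$ in branch $i^*$, and place into $V_l$ all of the $g_{i^*}^{(l)} \le r$ elements from branch $i^*$ together with $\min(g_i^{(l)}, r-1)$ elements from each other branch. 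In both cases $V_l \subseteq T_l \subseteq V$, the containment $V_l \subseteq \bigcup_i W_i^{(l)}$ is automatic, and condition~(2) is immediate because $(V_l \setminus \{\omega_l\}) \cap W_i^{(l)}$ has size at most $r-1$ for every $i$.

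Setting $F_l := |\{ i \in [b_l] : g_i^{(l)} = r \}|$ (the number of \emph{full} branches in tree $l$), a direct count shows the amount discarded equals $L_l := |T_l| - |V_l| = F_l$ when $\varepsilon_l = 1$ and $L_l = \max(0, F_l - 1)$ when $\varepsilon_l = 0$. Hence the entire lemma reduces to the uniform inequality $\sum_{l=1}^\mu L_l \le \tilde{\eta}$.

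The hard part will be establishing this inequality, which is not purely combinatorial and must invoke the specific shape of $\tilde{\eta}$. I plan to argue by contradiction: suppose $\sum_l L_l \ge \tilde{\eta} + 1$. Each full branch contributes exactly $r$ non-root elements to $V$ while each non-full branch contributes at most $r-1$, so tallying the non-root mass of $V$ across the $\nu$ trees of type~A ($b_l = \lambda+1$) and the $\mu-\nu$ trees of type~B ($b_l = \lambda$) should produce two lower bounds on $|V|$ of the form
\[
(r-1)(\tilde{\eta}+1) + \left\lceil \tfrac{\tilde{\eta}+1}{\lambda+1}\right\rceil \quad\text{and}\quad (r-1)(\tilde{\eta}+1) + \left\lceil \tfrac{\tilde{\eta}+1-\nu}{\lambda}\right\rceil,
\]
the first accounting for trees of type~A that contain full branches and the second for concentration in type~B trees. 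Using $|V| = k + \tilde{\eta}$ and optimizing over these two bounds, at least one would force
\[
(r-1)(\tilde{\eta}+1) + \max\!\left\{\left\lceil \tfrac{\tilde{\eta}+1}{\lambda+1} \right\rceil,\; \left\lceil \tfrac{\tilde{\eta}+1-\nu}{\lambda} \right\rceil\right\} \le k - 1,
\]
contradicting the characterization $\tilde{\eta} = \max\{ x : x(r-1) + \max(\lceil x/(\lambda+1) \rceil, \lceil (x-\nu)/\lambda \rceil) \le k-1 \}$ obtained from Theorem~\ref{ThmBndN1N2} and Proposition~\ref{ProIntN1>N2}. The split between the two ceiling terms mirrors exactly the max--min structure of $\Psi(x)$ solved in Proposition~\ref{ProIntN1>N2}, so the key step will be engineering the adversarial configuration so that the worst distribution of full branches lines up with whichever of those two ceilings is active.
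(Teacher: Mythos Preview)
Your construction of the $V_l$ is exactly the paper's: split $V$ by tree, keep the root (if present) or one full branch intact, and trim every remaining full branch by one element. The loss count $L_l$ you obtain coincides with what the paper bounds by $\lfloor (|T_l|-1)/r\rfloor$.

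The second half, however, follows a genuinely different route. The paper does not argue by contradiction from $\sum_l L_l \ge \tilde\eta+1$; instead it writes
\[
\sum_l L_l \;\le\; \sum_l \Big\lfloor \tfrac{|T_l|-1}{r}\Big\rfloor \;\le\; \Big\lfloor \tfrac{k+\tilde\eta-\epsilon}{r}\Big\rfloor,
\qquad \epsilon = \bigl|\{l : T_l \neq \emptyset\}\bigr|,
\]
bounds $\epsilon$ from below using the \emph{tree} capacities $(\lambda+1)r+1$ and $\lambda r+1$, and then verifies algebraically that $\lfloor (k+\tilde\eta-\epsilon)/r\rfloor \le \tilde\eta$ from the closed form of $\tilde\eta$. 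Your route instead bounds the number $m$ of trees with $L_l\ge 1$ using the \emph{branch} capacities $\lambda+1$ and $\lambda$, which amounts to the clean inequality $|V| \ge rS + m \ge \Psi(S)$ for $S=\sum_l L_l$, and then invokes the maximality characterization of $\tilde\eta$ as a black box. Both work; yours ties the argument more transparently back to $\Psi$, while the paper's avoids the case split on $\varepsilon_l$ in the loss estimate and keeps everything in terms of $|T_l|$.

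Two places in your sketch need repair. First, the displayed ``two lower bounds on $|V|$'' should carry the coefficient $r$, not $r-1$: what one actually gets (from $|T_l|\ge rL_l+1$ whenever $L_l\ge 1$, in both the $\varepsilon_l=1$ and $\varepsilon_l=0$ cases) is $|V|\ge r(\tilde\eta+1)+\lceil\cdot\rceil$; the factor $r-1$ appears only after moving $\tilde\eta$ across $|V|=k+\tilde\eta$. Second, the final sentence about ``engineering the adversarial configuration'' is backwards---there is nothing to engineer. For \emph{any} configuration with $S\ge\tilde\eta+1$ one has simultaneously $m\ge\lceil S/(\lambda+1)\rceil$ and $m\ge\lceil (S-\nu)/\lambda\rceil$ (because $L_l\le\lambda+1$ always, with equality possible in at most $\nu$ trees), hence $|V|\ge\Psi(S)\ge\Psi(\tilde\eta+1)>k+\tilde\eta$, the desired contradiction. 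With those two fixes the argument is complete.
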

\begin{proof}
Denote $U_l = V \cap (\cup_{i=1}^{\lambda+1} W^{(l)}_i)$ for $1 \le l \le \mu$.
Then the proof is completed by two steps.
First, we show that for all nonempty set $U_l$, $1 \le l \le \mu$, there exists a subset $V_l \subseteq U_l$ satisfying
\begin{itemize}
\item $|V_l| \ge |U_l| - \left\lfloor \frac{|U_l|-1}{r} \right\rfloor$; and
\item There exists $\omega_l \in V_l$ such that $|(V_l - \{\omega_l\}) \cap W^{(l)}_i| \le r-1$ for all $i \in [\lambda+1]$.
\end{itemize}
Second, by setting $V_l = \emptyset$ for all $l \in [\mu]$ with $|U_l| = 0$, we prove that $|V_1 \cup V_2 \cup \dots \cup V_\mu| \ge k$. The details are given below.

{\bf Step 1.} Suppose $U_l$ is nonempty.
Consider the following two cases.

(a) $\omega^{(l)}_0 \in U_l$.
Then there are at most $\left\lfloor \frac{| U_l | -1}{r} \right\rfloor$ sets out of $W^{(l)}_1,W^{(l)}_2,\dots,W^{(l)}_{\lambda+1}$ which are contained in $U_l$, say, $W^{(l)}_1,...,W^{(l)}_h\subseteq U_l$, where $h \le \left\lfloor \frac{| U_l | -1}{r} \right\rfloor$.
Define $V_l$ by deleting $\omega^{(l)}_{1,1},\omega^{(l)}_{2,1},\dots,\omega^{(l)}_{h,1}$ from $U_l$, then we have $|(V_l - \{\omega^{(l)}_0\}) \cap W^{(l)}_i| \le r-1$ for all $i \in [\lambda+1]$ and $|V_l| \ge |U_l| - \left\lfloor \frac{|U_l|-1}{r} \right\rfloor$.

(b) $\omega^{(l)}_0 \notin U_l$.
Similarly, there are at most $\left\lfloor \frac{| U_l |}{r} \right\rfloor$ sets out of $W^{(l)}_1,W^{(l)}_2,\dots,W^{(l)}_{\lambda+1}$ which are contained in $U_l \cup \{\omega^{(l)}_0\}$, say, $W^{(l)}_1,...,W^{(l)}_{h'}\subseteq U_l\cup \{\omega^{(l)}_0\}$, where $h^\prime \le \left\lfloor \frac{| U_l |}{r} \right\rfloor$.
Define $V_l$ by deleting $\omega^{(l)}_{2,1},\omega^{(l)}_{3,1},\dots,\omega^{(l)}_{h^\prime,1}$ from $U_l$, then we have $|(V_l - \{\omega^{(l)}_{1,1}\}) \cap W^{(l)}_i| \le r-1$ for all $i \in [\lambda+1]$, and
$|V_l|  \ge |U_l| - (\left\lfloor \frac{| U_l |}{r} \right\rfloor -1) \ge |U_l| - \left\lfloor \frac{|U_l|-1}{r} \right\rfloor$.

{\bf Step 2.} Observe that
\begin{align*}
|\cup_{l=1}^{\mu} V_l| & = \sum_{l \in [\mu], U_l \neq \emptyset} |V_l| \\
& \ge \sum_{l \in [\mu], U_l \neq \emptyset} (|U_l| - \left\lfloor \frac{|U_l|-1}{r} \right\rfloor) \\
& = k+ \tilde{\eta} - \sum_{l \in [\mu], U_l \neq \emptyset}\left\lfloor \frac{|U_l|-1}{r} \right\rfloor. \\
& \ge k+\tilde{\eta} - \left\lfloor\sum_{l \in [\mu], U_l \neq \emptyset} \frac{|U_l|-1}{r} \right\rfloor \\
& = k+ \tilde{\eta} - \left\lfloor \frac{k+\tilde{\eta}-\epsilon}{r} \right\rfloor,
\end{align*}
where $\epsilon = |\{l \in [l] : U_l \neq \emptyset\}|$.
Then it suffices to show $\left\lfloor \frac{k+\tilde{\eta}-\epsilon}{r} \right\rfloor \le \tilde{\eta}.$

Denote $\epsilon_1 = |\{l :1 \le l \le \nu, U_l \neq \emptyset\}|$ and $\epsilon_2 = |\{l :\nu+1 \le l \le \mu, U_l \neq \emptyset\}|$, then $\epsilon = \epsilon_1 + \epsilon_2$.
Because $|U_1|+|U_2|+\dots+|U_\mu|=k+\tilde{\eta}$ and
\begin{equation*}
|U_l| \le \begin{cases} |\cup_{i=1}^{\lambda+1} W^{(l)}_i| = (\lambda+1)r+1, \text{ for } 1 \le l \le \nu \\ |\cup_{i=1}^{\lambda} W^{(l)}_i| = \lambda r+1, \text{ for } \nu+1 \le l \le \mu,\end{cases}
\end{equation*}
we have
\begin{equation*}\label{EqEpsilon}
\begin{cases}
0 \le \epsilon_1 \le \nu; \\
0 \le \epsilon_2 \le \mu-\nu; \\
k+\tilde{\eta} \le \epsilon_1 ((\lambda+1)r+1) + \epsilon_2 (\lambda r+1).
\end{cases}
\end{equation*}
Since $\epsilon_1 ((\lambda+1)r+1) + \epsilon_2 (\lambda r+1)\leq ((\lambda+1)r+1)(\epsilon_1+\epsilon_2)$ and also $\epsilon_1 ((\lambda+1)r+1) + \epsilon_2 (\lambda r+1)= (\lambda r+1)(\epsilon_1+\epsilon_2)+\epsilon_1r\le(\lambda r+1)(\epsilon_1+\epsilon_2)+\nu r$,
it follows that $\epsilon \ge \max\{ \frac{k+\tilde{\eta}}{(\lambda+1)r+1} , \frac{k+\tilde{\eta}- r\nu}{\lambda r+1}\}$.
Thus
\begin{align*}
\left\lfloor \frac{k+\tilde{\eta}-\epsilon}{r} \right\rfloor & \le \left\lfloor \frac{1}{r}(k+\tilde{\eta}-\max\{ \frac{k+\tilde{\eta}}{(\lambda+1)r+1} ,  \frac{k+\tilde{\eta}- r\nu}{\lambda r+1} \}) \right\rfloor \\
& = \left\lfloor \frac{1}{r} \min\{ \frac{(k+\tilde{\eta})(\lambda+1)r}{(\lambda+1)r+1} ,  \frac{(k+\tilde{\eta})\lambda r + r\nu}{\lambda r+1} \} \right\rfloor \\
& = \min\{\left\lfloor \frac{(k+\tilde{\eta})(\lambda+1)}{(\lambda+1)r+1} \right\rfloor, \left\lfloor \frac{(k+\tilde{\eta})\lambda + \nu}{\lambda r+1} \right\rfloor\}.
\end{align*}
Note that $\tilde{\eta} = \min \{ \left\lceil \frac{(\lambda+1)(k-1)+1}{(\lambda+1)(r-1) + 1} \right\rceil, \left\lceil \frac{\lambda(k-1) + \nu +1}{\lambda(r-1) + 1} \right\rceil\} -1$.
Then if $\tilde{\eta} = \left\lceil \frac{(\lambda+1)(k-1)+1}{(\lambda+1)(r-1) + 1} \right\rceil-1$, it has
\begin{align*}
\frac{(k+\tilde{\eta})(\lambda+1)}{(\lambda+1)r+1} - (\tilde{\eta}+1) & = \frac{(\lambda+1)(k-1)-((\lambda+1)(r-1)+1)(\tilde{\eta}+1)}{(\lambda+1)r+1} \\
& = \frac{(\lambda+1)(r-1)+1}{(\lambda+1)r+1} \times (\frac{(\lambda+1)(k-1)}{(\lambda+1)(r-1)+1} - (\tilde{\eta}+1)) \\
& < 0,
\end{align*}
and therefore $\left\lfloor \frac{(k+\tilde{\eta})(\lambda+1)}{(\lambda+1)r+1} \right\rfloor \le \tilde{\eta}$.
Similarly, if $\tilde{\eta} = \left\lceil \frac{\lambda(k-1) + \nu +1}{\lambda(r-1) + 1} \right\rceil-1$, it can be proved that $\left\lfloor \frac{(k+\tilde{\eta})\lambda + \nu}{\lambda r+1} \right\rfloor \le \tilde{\eta}$.
Thus we conclude that $\left\lfloor \frac{k+\tilde{\eta}-\epsilon}{r} \right\rfloor \le \tilde{\eta}$.
\end{proof}

\end{document}